\crefname{algocf}{alg.}{algs.}
\NewDocumentCommand{\intstep}{O{1}mm}
 {
  \int_step_inline:nnn { #1 } { #2 } { #3 }
 }
\ProvideDocumentCommand{\appto}{mm}
 {
  \tl_put_right:Nn #1 { #2 }
 }
\protected\def\verythinspace{%
  \ifmmode
    \mskip0.5\thinmuskip
  \else
    \ifhmode
      \kern0.08334em
    \fi
  \fi
}
\newcommand{\matr}[1]{\mathbf{#1}}
\newcommand*{\lmulti}{\{\mskip-5mu\{}
\newcommand*{\rmulti}{\}\mskip-5mu\}}
\DeclarePairedDelimiter\ceil{\lceil}{\rceil}
\DeclarePairedDelimiter\floor{\lfloor}{\rfloor}
\DeclareMathOperator*{\argmin}{arg\,min}
\begin{document}

\title{Faster optimal univariate microgaggregation}

\author{Felix I. Stamm}
\orcid{0000-0002-4469-7511}
\email{felix.stamm@cssh.rwth-aachen.de}
\affiliation{%
  \institution{RWTH Aachen University}
  \state{North-Rhine Westphalia}
  \country{Germany}
  \postcode{43017-6221}
}

\author{Michael T. Schaub}
\email{schaub@cs.rwth-aachen.de}
\affiliation{%
  \institution{RWTH Aachen University}
\state{North-Rhine Westphalia}
  \country{Germany}
  }

\renewcommand{\shortauthors}{F. I. Stamm \& M. T. Schaub}

\begin{abstract}

Microaggregation is a method to coarsen a dataset, by optimally clustering data points in groups of at least $k$ points, thereby providing a $k$-anonymity type disclosure guarantee for each point in the dataset.
Previous algorithms for univariate microaggregation had a $O(k n)$ time complexity.
By rephrasing microaggregation as an instance of the concave least weight subsequence problem, in this work we provide improved algorithms that provide an optimal univariate microaggregation on sorted data in $O(n)$ time and space.
We further show that our algorithms work not only for sum of squares cost functions, as typically considered, but seamlessly extend to many other cost functions used for univariate microaggregation tasks.
In experiments we show that the presented algorithms lead to real world performance improvements.
\end{abstract}

\maketitle

\newcommand*\mytablecontents{}
\newcommand{\mymatrix}[2]{ 

    \def\n{#1}
    \def\k{#2}
    \pgfmathsetmacro\nminus{int(\n-1)}
    \pgfmathsetmacro\nm{int(\n-1)}
    \pgfmathsetmacro\nk{int(\n-\k)}
    \renewcommand*\mytablecontents{}
    \foreach \i in {0,1,...,\nminus}{
      \foreach \j in {1,2,...,\n}{
        \xappto\mytablecontents{$m_{\i\verythinspace\j}$ \&}
      }
     \gappto\mytablecontents{\\}
    }
      \matrix[matrix of nodes, ampersand replacement=\&](A){
       \mytablecontents
     };
    \foreach \i in {1,...,\n}{
        \pgfmathsetmacro\j{int(\n+1-\i)}
        \draw [BarreStyle=red] (A-\i-1.center) node[] {} to (A-\n-\j.center) ;
    }
    \ifnum\k>2
        \pgfmathsetmacro\kmtwo{int(\k-2)}
        \foreach \i in {1,...,\kmtwo}{
            \pgfmathsetmacro\j{int(\n-\i)}
            \pgfmathsetmacro\ip{int(\i+1)}
            \draw [BarreStyle=red] (A-1-\ip.center) node[] {} to (A-\j-\n.center) ;
        }
    \fi
    \pgfmathsetmacro\twok{int(2*\k)}
    \pgfmathsetmacro\twokm{int(2*\k-1)}
    \foreach \i in {2,...,\k}{
        \pgfmathsetmacro\j{int(\twokm+\i-1)}
        \pgfmathsetmacro\ki{int(\k+\i-1)}
        \draw [BarreStyle=red] (A-\i-\ki.center) node[] {} to (A-\i-\j.center) ;
    }

    \foreach \i in {\twok,...,\n}{
        \pgfmathsetmacro\j{int(\n+1-\i)}
        \draw [BarreStyle=orange] (A-1-\i.center) node[] {} to (A-\j-\n.center) ;
    }
    
    \draw [BarreStyle=green] (A-1-\k.center) node[] {} to (A-1-\twokm.center) ;

    \pgfmathsetmacro\threek{int(3*\k)}
    \pgfmathsetmacro\fourkm{int(4*\k-1)}
    \pgfmathsetmacro\threekm{int(3*\k-1)}
    \pgfmathsetmacro\kplus{int(\k+1)}
    \pgfmathsetmacro\kplustwo{int(\k+2)}
    \foreach \i in {\twok,...,\threekm}{
        \pgfmathsetmacro\j{int(\n+1-\i+\k)}
        \draw [BarreStyle=green] (A-\kplus-\i.center) node[] {} to (A-\j-\n.center) ;
    }
    
    \draw[black] (A-\kplus-\twok.north west) -- (A-\kplus-\threekm.north east) -- (A-\twok-\threekm.south east) -- (A-\twok-\twok.south west) -- (A-\kplus-\twok.north west);
    \pgfmathsetmacro\R{int(mod(\n-3*\k+1,\k))}
    \pgfmathsetmacro\f{int( (\n-3*\k+1-\R)/\k)}
    \pgfmathsetmacro\twopf{int(2+\f)}
    \foreach \i in {3,...,\twopf}{
        \pgfmathsetmacro\ik{int(\i*\k)}
        \pgfmathsetmacro\ipk{int((\i+1)*\k-1)}
        \pgfmathsetmacro\imk{int((\i-2)*\k+2)}
        \draw[black] (A-\imk-\ik.north west) -- (A-\imk-\ipk.north east) -- (A-\ik-\ipk.south east) -- (A-\ik-\ik.south west) -- (A-\imk-\ik.north west);
    }
    \ifnum\R>0
        \pgfmathsetmacro\i{int(3+\f)}
        \pgfmathsetmacro\ik{int(\i*\k)}
        \pgfmathsetmacro\ipk{int(\ik+\R-1)}
        \pgfmathsetmacro\imk{int((\i-2)*\k+2)}
        \draw[black] (A-\imk-\ik.north west) -- (A-\imk-\ipk.north east) -- (A-\ik-\ipk.south east) -- (A-\ik-\ik.south west) -- (A-\imk-\ik.north west);
    \fi
}
\section{Introduction}
Public data, released from companies or public entities, are an important resource for data science research.
Such data can be used to provide novel types of services to society and provide an essential mechanism for holding public or private entities accountable.
Yet, the benefits of publicly released data have to be carefully traded off against other fundamental interest such as privacy concerns of affected people~\cite{aggarwal2008general}. 
A common practical solution to circumvent these problems is thus to reduce the resolution of the collected raw data, e.g., via spatial or temporal aggregation, which leads to a coarse-grained summary of the initial data.

Microaggregation~\cite{defays1998masking, domingo2005ordinal, domingo2002practical} is a method that aims to provide an adaptive optimal aggregation that remains informative, while providing a certain guaranteed level of privacy.
Specifically, in microaggregation we aim to always group at least $k$ data items together onto a common representation.
For instance, for vectorial data, we may map data points to their centroid values, while ensuring a cluster size of at least $k$ points.
The found centroid values of the clusters may thus be released and used as aggregated representation of the original data.
Clearly, for the aggregated data to remain useful for further analysis, it should remain closely aligned with the original data according to some metric relevant for the task at hand.
Hence, in microaggregation we seek to minimize a distortion metric of the aggregated data, while respecting the minimal group size constraint.
This minimal group-size constraint (at least $k$ data-items per group), which ensures a guaranteed minimal level of ``privacy'' within each cluster, is a major difference to standard clustering procedures such as $k$-means, which simply aim to obtain a low-dimensional representation of the observed data in terms of clusters.

\begin{figure}[btp]
\includegraphics[width=1\columnwidth] {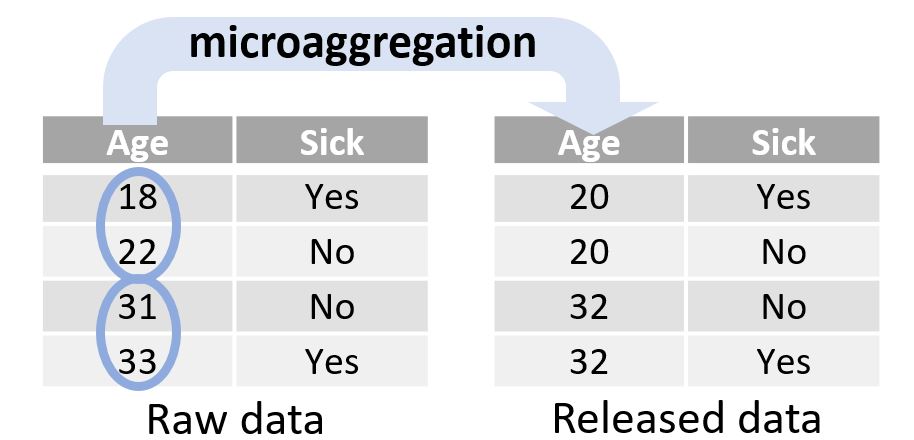}
\vspace{-3mm}
\caption{Example application of univariate microaggregation to a toy medical dataset.
Left we have the original data containing quasi identifying information such as age and sensitive information such as health status.
In the anonymized data an attacker interested in the health status of an individual and equipped with knowledge about the age of an individual can no longer infer the health status with certainty.
To achieve such an anonymization, potentially identifying attributes such as age are aggregated.
The task of finding such an aggregation while minimizing the distortion of the data is called univariate microaggregation.
In this work, we are concerned with efficient algorithms to solve univariate microaggregation for various kinds of distortion metrics/cost functions.}
\label{fig:1}
\end{figure}

Microggregation tasks are usually formulated as discrete optimization problems, where low costs indicate high similarity among the data items in each cluster.
Similar to many other clustering problems, microaggregation is in general an NP-hard problem (see~\cite{oganian2001complexity} or \cref{sec::np-hardness}).
A specific form of microaggregation that remains efficiently solvable, however, is univariate microaggregation (UM), in which the data points $\{x_i\}$ to be aggregated are (real-valued) scalars. 
This is the focus of our work here.

Despite its relative simplicity, univariate micro-aggregation is a useful primitive in a number of problems.
A concrete example is the anonymization of degree sequences of graphs~\cite{casas2017k}.
Further, univariate microaggregation can also be used to provide heuristic solutions to microaggregation tasks in higher dimensions with the help of projections~\cite{domingo2002practical}.
In this work we consider the univariate microaggregation problem under a wide range of cost functions, including the typically considered sum-of-squares error (SSE) based on the (squared) $\ell_2$ norm, the $\ell_1$ norm, the $\ell_\infty$ norm and round-up cost error types.

\begin{figure}[btp]
\includegraphics[width=1\columnwidth] {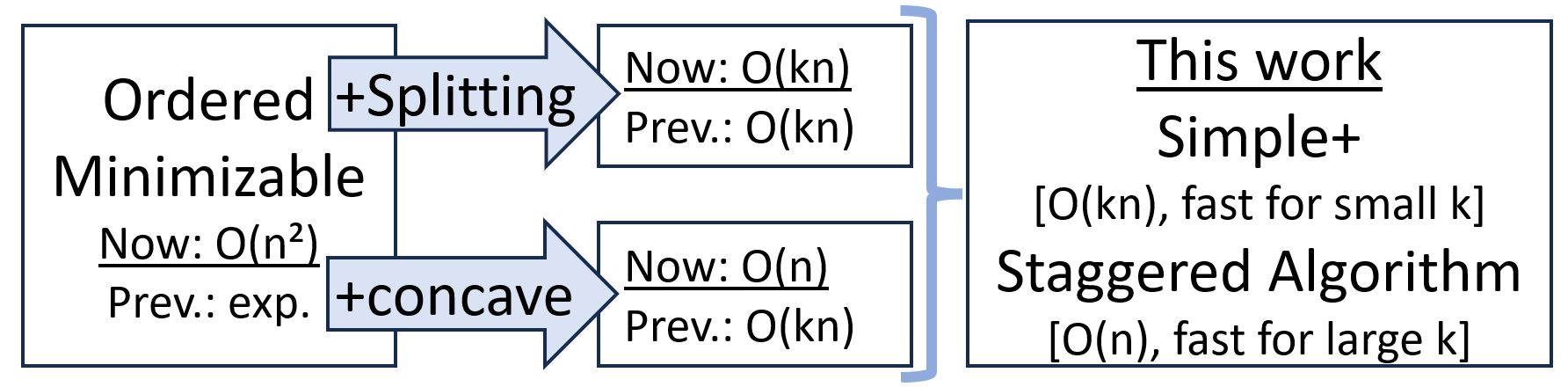}
\vspace{-3mm}
\caption{Complexity of UM on sorted data for different classes of cost functions.
For ordered minimizable cost functions univariate microaggregation can be solved in $O(n^2)$.
If additionally splitting is beneficial, then one can actually solve UM in $O(kn)$ while for concave cost functions UM can be solved in $O(n)$.
For cost functions that have both properties we provide the Simple+ and the Staggered algorithm which have faster empirical runtime without sacrificing worst case bounds.}
\Description{MISSING}
\label{fig:contributions}
\end{figure}

\textbf{Contributions}
Our main contributions are as follows (see \Cref{fig:contributions} for a visual representation).
\begin{itemize}
    \item We show that for ``ordered minimizable'' cost functions UM can be solved in $O(n^2)$, improving previous results with exponential runtime. We achieve this by rephrasing UM as a least weight subsequence problem~\cite{wilber1988concave, wu1991optimal}.
    \item We show that for ordered minimizable cost functions, which in addition have a ``splitting is beneficial'' property, this can be improved to $O(kn)$.
        This includes many popular cost functios based on the $\ell_1$ norm (sum of absolute errors), $\ell_2$ norm (sum of squared errors), $\ell_\infty$ (maximal error) norm, and round up/down cost functions.
    \item Finally, we show that for ordered minimizable cost functions which are concave, algorithms with $O(n)$ time and space are possible.
        These findings apply to above mentioned examples of $\ell_p$ norm-based cost functions.
\end{itemize}
In our presentation we focus on three key conceptual ideas: ``ordered minimizable'', ``splitting is beneficial'', and concave costs.
While these concepts are already implicitly present in the literature, they are typically not made explicit, which can make it difficult to relate the underlying algorithmic ideas to each other.
Making these concepts explicit, enables us to unfold a natural problem hierarchy within the associated univariate microaggregation formulations, which leads us to the design of two algorithms that use all three concepts: the simple+ algorithm and the staggered algorithm.
These algorithms have a faster runtime compared with previous algorithms for the UM task.
Lastly, we provide several practical improvements specific to UM that allow for (i) algorithms that run empirically faster in practice and (ii) substantially decreased impact of floating point errors on the resulting clustering.
Overall, our work thus enables to robustly compute optimal UM-clusterings for various cost functions and large values of $n$ and $k$.
An implementation of the presented ideas is available at \url{https://github.com/Feelx234/microagg1d} and the code can also be found at \url{https://doi.org/10.5281/zenodo.10459327}.

\textbf{Outline}
In \Cref{sec:problem_formulation} we first outline definitions associated with UM and the related least weight subsequence (LWSS) problem.
In \Cref{sec::faster} we then show that for ordered minimizable cost functions, we can reformulate UM as LWSS problem which allows for $O(n^2)$ algorithms. 
For cost functions with the splitting is beneficial property, this runtime complexity can be improved to $O(kn)$. For concave cost functions this can be futher improved to $O(n)$.
Closing \Cref{sec::faster} we introduce a simpler and empirically faster $O(n)$ algorithm for UM.
In \Cref{sec::realHardware} we present considerations for running UM on real hardware. 
First, we present two strategies to avoid issues arising from finite float precision.
Second, we illustrate a small algorithmic trick to use fewer instructions when computing the cluster cost, which is the main computation carried out within UM.
We verify that the presented theoretical considerations lead to significant empirical performance improvements in \Cref{sec::experiments}.

\subsection{Related work}
The question of how to publicly release data, while retaining certain levels of privacy, has become increasingly important in recent years.
Privacy preservation for datasets is most relevant if data entries contain both public as well as sensitive private information.
Note that what information is considered public and private can depend on the application scenario.
For instance, we may have tabular data, where each row contains public information such as  name or age, in conjunction with private information such as medical diagnoses.
In this case, an important objective is to design a surrogate dataset, such that it is virtually impossible to use public information to deduce private information about an individual.

The public attributes of a database can usually be split into identifiers and quasi-identifiers.
To privacy-harden a database it is typically insufficient to merely remove identifiers such as name or social security number. It can often remain possible to identify an individual in the dataset by a combination of quasi-identifiers~\cite{rocher2019estimating}, e.g., there is only one 30 year old female in the data.
Thus other privacy protection measures need to be employed.

To combat the leakage of sensitive information, different privacy concepts such as $l$-diversity~\cite{machanavajjhala2007diversity}, $t$-closeness~\cite{li2006t}, and the popular $k$-anonymity~\cite{samarati1998protecting, sweeney2002k} have been proposed.
The latter is the privacy concept we are concerned with in this work.
If a dataset is $k$-anonymous, it is not possible to use any combination of quasi-identifiers to narrow down the set of individuals described by those quasi-identifiers to less than $k$ individuals.
Because there will always be $k-1$ other rows indistinguishable by quasi-identifiers alone, it remains impossible to deduce which data item corresponds to a specific identity, no matter how much knowledge about the quasi-identifiers of an individual is available.
Unfortunately, datasets are usually not inherently $k$-anonymous but the released data needs to be altered to become $k$-anonymous. This leads to a distortion of the statistics of the released data.

Microaggregation~\cite{domingo2002practical, defays1998masking, domingo2005ordinal} was proposed as a method to make a dataset $k$-anonymous.
Therefore data entries are grouped into groups of size at least $k$, and the quasi-identifiers of entries are replaced with their group-centroid values.
To uphold the utility of the released data, the grouping is conducted by minimizing a distortion metric subject to a minimal group size constraint.
Solving microaggregation with more than one variable, so called multivariate microaggregation, is known to be NP-hard for the sum-of-squares cost function~\cite{oganian2001complexity}, a popular distortion metric.
To perform multivariate microaggregation in practice, several algorithms have been proposed.
Some of these alogorithms come with approximation guarantees, while others are simply heuristics.
We refer to Ref.~\cite{yan2022privacy} for many references to algorithms for multivariate microaggregation.

Microaggregation with only one variable, so called univariate microaggregation, is known to be solvable in polynomial time. More precisely, $O(kn)$ for the sum-of-squares cost function~\cite{hansen2003polynomial} by constructing a graph and solving a shortest path problem on this graph.
Algorithms which solve univariate microaggregation may also be used to solve multivariate microaggregation problems.
Recently, other cost functions besides SSE have been used for univariate microaggregation tasks such as degree-sequence anonymisation~\cite{casas2017k}.

We remark that the term univariate microaggregation is also used in the literature to describe a microaggregation problem of \emph{ordered} high dimensional data: In that scenario we are provided with data vectors $x_i\in\mathbb{R}^d$, with $d\ge 2$ and an extrinsically defined total order of the points, which must be respected when creating clusters.
Hence, any cluster that contains the points $x_i$ and $x_j$ must also contain all points greater than $x_i$ and smaller than $x_j$ according to the provided ordering.
While some of the ideas we develop here may be extended to this scenario, this is in general not a trivial task and beyond the scope of this work.

\section{Problem Formulation and preliminaries}
\label{sec:problem_formulation}

In univariate microaggregation (UM) we consider a set of scalar data points $x_i\in \mathbb{R}$ for $i=1,\dots,n$.
To keep the treatment general we allow for repeated elements $x_i$, i.e., we consider our universe $\Omega=\lmulti x_1, \dots, x_n\rmulti$ to be a multi-set of points.
The goal of UM can now be formulated as follows. 
\begin{definition}[Univariate Microaggregation problem]
    Given a multi-set $\Omega =\lmulti x_1,\dots,x_n\rmulti$ of scalars, find a non-empty partition of the points into clusters $X_j$, such that each cluster has size at least $k$ and a total additive cost function $TC(\lmulti X_1, \dots\rmulti) := \sum_j C(X_j)$ is minimized, where $C$ denotes a cost function that measures the distortion within each cluster.
\end{definition}

Formally, we look for a partition, i.e., a split of the multi-set $\Omega$ into nonempty multi-sets $X_i\neq \emptyset$, such that $\bigcup_{i=1}^n X_i=\Omega$. %
To avoid confusion, in this work the union of two multi-sets is an additive union, i.e., the multiplicities of the elements add up when forming the union of two multi-sets.
Notice, that there might be multiple partitions minimizing the total cost even if all $x_i$ are unique.
This is usually not a problem as we are typically interested in finding any partition minimizing the total cost.

A brute-force approach, neglecting possible structure in the cost function $C$ needs at least exponential time, as all possible partitions have to be assessed.
However, two important observations have been employed in the literature to lower this complexity.
First, for many cost functions it can be shown that a minimal cost partition has no interleaved clusters (see \Cref{lem:costs_om} or~\cite{domingo2002practical} for the special case of SSE).
This substantially restricts the search space that needs to be explored in order to find an optimal solution, as it implies that only pairwise-ordered clusterings need to be considered.
From now on we thus assume that data points $x_i$ are sorted in ascending order, i.e. $i<j \Rightarrow x_i \leq x_j$.
For a given cost function $C$, this allows us to denote by $C(i,j):=C(\lmulti x_{i+1},\dots, x_{j}\rmulti)$, the cost of a cluster with starting point $x_{i+1}$ and endpoint $x_j$.
Second, for many cost functions, there always exist an optimal cluster assignment in which no cluster has size larger than $2k - 1$, i.e., it is never detrimental to split clusters of size at least $2k$ into smaller clusters (see \cref{eq::splitting} or~\cite{domingo2002practical} for the special case of SSE). 
This also reduces the size of the search space, as there is no need to check possible clusterings containing clusters of size greater $2k-1$.

Previously, these observations were used to solve univariate micoraggregation for the SSE cost function on sorted data in $O(k n)$ time~\cite{hansen2003polynomial, mortazavi2020improved}.
In the following we will exploit the fact that many popular cost functions are concave cost functions, which can be used to create even faster algorithms. 
For these concave cost functions it is possible to achieve $O(n)$ runtime algorithms for the UM problem on sorted data by rephrasing UM as a least weight subsequence problem.

\subsection{The Least-Weight Subsequence Problem}
\label{sec:LWSS}
As we will see in the next sections, univariate microaggregation can actually be phrased such that established algorithms which solve the least-weight subsequence problem can be applied to UM.
Hence, we briefly recall the relevant aspects of the least-weight subsequence problem
\begin{definition}[Least weight subsequence problem~\cite{wilber1988concave} ]
Given an integer $n$ and a cost function $C(i,j)$, find an integer $q\geq 1$ and a sequence of integers $0=l_0<l_1<\dots< l_{q-1} < l_q=n$ such that the total cost $TC=\sum_{i=0}^{q-1} C(l_i, l_{i+1})$ is minimized.
\end{definition}

\textbf{A standard algorithm~\cite{wilber1988concave}.}
To gain some intuition about how to solve the least-weight subsequence problem, we first outline an algorithm with an $O(n^2)$ time, $O(n)$ memory requirement.
To this end we introduce the auxiliary variables $m_{ij}$, which is the sum of the cost obtained after solving the LWSS problem for the points $x_1 \dots x_i$ plus the cost of $C(i,j):=C(\lmulti x_{i+1},\dots, x_{j}\rmulti)$.
We call $m_{ij}$ the \emph{conditional minimal cost}.
More specifically, $m_{0j}:=C(0,j)$ is simply the cost of considering the first $j$ points to be in a single cluster.
The remaining conditional minimal costs are then recursively defined as $m_{ij}:=\min_{l<i} m_{li}+C(i,j)$.
We notice that $\min_{l<i} m_{li}$ is the minimal total cost required to solve the least-weight subsequence problem on the points $x_1 \dots x_i$,
and so $m_{ij}$ can be interpreted as the sum of the costs of (i) assigning the first $i$ points optimally and (ii) assigning the points  $\lmulti x_{i+1},\dots, x_{j}\rmulti$ to a single cluster.

To efficiently compute all $m_{ij}$ we make use of a lookup array $A$ which we fill with the relevant $m_{ij}$ entries as follows:
We first compute $m_{01}$ and store it in  array $A$ at position 1. 
Next, we compute the conditional minimal costs $m_{02}$ and $m_{12}$.
To compute $m_{12}$ which we read the entry $m_{01}$ from our lookup array $A$ at position 1. 
We store the minimum of $m_{02}$ and $m_{12}$ in array $A$ at position two.
We can continue in the same way iteratively.
In each iteration we increase $j$ by one and compute all the conditional minimal costs $m_{ij}$ with $i<j$. 
Each $m_{ij}$ requires the value $\min_{l<i} m_{l,i}$ which we can read from our array $A$ at position $i$.
At the end of each such iteration we store $\min_i m_{ij}$ in our array $A$ at position $j$ such that it is available for further computations.
The algorithm terminates once we have found $\min_i m_{in}$ which corresponds to the minimal total cost.
Overall, this takes $O(n^2)$ time if we can compute the cost function $C$ in $O(1)$.

In univariate microaggregation we are usually more interested in a minimal cost clustering rather than the minimal cost itself.
This can be achieved by slight adaptation of the above algorithm.
Whenever we compute $\min_i m_{ij}$ we also store the index $i=\arg\min_i m_{ij}$ of the minimum in a linear array $A_\text{ind}$ at position $j$.
The full pseudo code for this algorithm can be found in \Cref{alg::classic}.
The result of this standard algorithm implicitly holds the information about a minimal cost clustering which can be obtained by \emph{backtracking} in $O(n)$.
Starting with $j=n$ we consider $i_{\min} = \argmin_i m_{ij}$ which we can read from our array $A_\text{ind}$ at position $j$.
All points with index $l$  with $i_{\min}<l\leq j$ belong to one cluster in the optimal costs clustering.
If $i_{\min}>0$ we repeat the same process with $j=i_{\min}$ to obtain the next cluster.
Full pseudo code of this backtracking procedure is provided in \Cref{alg::backtracking} in the appendix.

\textbf{Algorithms for concave cost functions.}
More efficient algorithms than the standard algorithm outlined above are known for the \emph{concave least weight subsequence} problem.
The problem is called concave if the cost function $C$ satisfies for all $0\leq a<b<c<d\leq n$:
\begin{align}
C(a, c) + C(b, d) \leq C(a, d) + C(b, c).
\label{eq:monge}
\end{align}
The above constraint, is sometimes called ``quadrangle inequality''. 
If a cost function fulfills this requirement it is called \emph{concave}.

If we consider non negative cost functions (i.e. $\forall_X C(X)\geq 0$) with the property that a one-element cluster has no cost ($C(i, i+1)=0$), we can derive a few interesting implications about concave cost functions.
The first observation is, that \emph{wide is worse}. 
Mathematically this means that for $a < b < c$
\begin{align*}
    C(a,b) &\leq C(a,c) \\
    C(b,c) &\leq C(a,c).
\end{align*}
In simple words, this means that if we widen a cluster by adding more points to the left or to the right the cost does not decrease. A similar consequence of concavity is that \emph{splitting is beneficial}, i.e. for all $a < b < c$
\begin{align}
    C(a,b) + C(b,c) \leq C(a,c).
    \label{eq::splitting}
\end{align}
Note that the above equation is effectively an opposite of the triangle inequality (which would hold for so-called convex cost functions).

Most important for solving least weight subsequence problems is the consequence that the associated matrix $\matr{C}$ with entries $\matr{C}_{i,j}=C(i,j)$ is a \emph{Monge matrix} (where for notational convenience we index the rows starting from zero and the columns starting from 1).
Most relevant for our purposes is that Monge matrices and their transpose are \emph{totally monotone}~\cite{burkard1996perspectives}, i.e., it holds for each $2\times 2$ submatrix  
\begin{equation*}
    \begin{pmatrix}
        \alpha & \beta\\
        \gamma & \delta
    \end{pmatrix}
\end{equation*}
that $\gamma < \delta$ implies that  $\alpha < \beta$ and, similarly, we have $\gamma = \delta \Rightarrow \alpha \leq \beta$.
It is easy to see that if we add a constant value to a column of a totally monotone matrix, the resulting matrix is again totally monotone. 
This implies that the transposed conditional minimal cost matrix $\matr{M}^T$ with entries $\matr{M}_{ij}=m_{ij}$ is totally monotone, as it is obtained from the transposed cost matrix $\matr{C}^T$ (which is Monge) by adding constant values ($\min_l m_{li}$) to the columns.
Finding all minima within the rows of implicitly defined totally monotone matrices of shape $n\times m$ ($n\leq m$) is possible in $O(m)$ using the SMAWK algorithm~\cite{aggarwal1986geometric}.

Using the SMAWK algorithm, dynamic programming~\cite{wilber1988concave, galil1989linear} allows us to solve the concave least weight subsequence problem in $O(n)$ if the concave cost function $C$ can be computed in $O(1)$ time using $O(n)$ time for preprocessing.
While the works~\cite{wilber1988concave, galil1989linear} focus on concave cost functions $C$, we notice that the presented algorithms in~\cite{wilber1988concave, galil1989linear} will also work if the corresponding transposed cluster cost matrix $\matr{M}^T$ is only \emph{totally monotone}.
As we will see in the next chapter, univariate microaggregation for concave cost functions can be phrased as a least weight subsequence problem with an non-concave adapted cost function $C_{\text{adapt}}$.
The corresponding transposed cluster cost matrix $\matr{M}^T$ is nonetheless totally monotone which allows for the algorithms from~\cite{wilber1988concave, galil1989linear} to be applied to univariate microaggregation.

\section{Faster univariate microaggregation}
\label{sec::faster}
In the following we present what properties of the cost function can be used to efficiently solve the univariate microaggregation problem.
We first reformulate UM as a least weight subsequence problem (LWSS) which allows for $O(n^2)$ algorithms. We call cost functions which allow this reformulation \emph{ordered minimizable}. If cost functions additionally also have the \emph{splitting is beneficial} property, $O(kn)$ algorithms are possible.
For the very common type of \emph{concave cost functions}, we show that UM can be solved in $O(n)$.
Lastly, we present an $O(n)$ algorithm, the staggered algorithm, which shows faster empirical running for the concave UM problem compared to more generic $O(n)$ LWSS algorithms presented in~\cite{wilber1988concave, galil1989linear}.

\subsection{Reformulating univariate microaggregation as a least weight subsequence problem}
\label{sec::ordered_main}
When comparing UM and LWSS, the most apparent issue is, that in the LWSS problem the values are inherently ordered while for UM no such inherent order is apparent for all cost functions.
We will thus consider properties of cost functions that imply an inherent order.
We will now characterize those cost functions that allow a reformulation of the univariate microaggregation problem as a least weight subsequence problem.
Parts of the below work is inspired by the treatment of SSE in Ref.~\cite{domingo2002practical}.

\begin{definition}[Pairwise ordered sets]
We call two (multi-)sets $A$ and $B$ pairwise ordered iff all elements in one set are smaller or equal than those in the other, i.e. $\forall_{a\in A, b \in B}\, a\leq b$ or $\forall_{a\in A, b \in B}\, b\leq a$.
We call a partition pairwise ordered iff all pairs of multi-sets in the partition are pairwise ordered.
\end{definition}

Equipped with this definition we can characterize those cost functions whose UM problem can be reformulated as a least weight subsequence problem.
\begin{definition}[Ordered minimizable]
    We call a total cost $TC$ ordered minimizable if it is sufficient to consider only pairwise ordered partitions when minimizing the total cost $TC$ over all relevant\footnote{For UM the relevant partitions are those which respect the minimum cluster size constraint. For k-means the relevant partitions are those of size $k$.} partitions of the universe $\Omega$.
    \label{def::minimizable}
\end{definition}

To grasp the importance of \cref{def::minimizable}, lets consider the case of minimizing an ordered minimizable total cost for a bipartition. 
If there are $n$ elements there are $n-1$ pairwise ordered partitions splitting the data in two parts, but there are $2^{n}-2$ partitions in total.
So being able to \emph{only} check all pairwise ordered partitions makes a huge computational difference.

While \Cref{def::minimizable} is probably the least restrictive definition, most widely used cost functions fulfill the more restrictive following definition which mixes both the UM and k-means restrictions.
\begin{definition}[q-partition ordered minimizable]
    A total cost function $TC$ is $q$-partition ordered minimizable iff for all partitions $P=\lmulti X_1, \dots, X_q\rmulti$ with multi-sets of cardinalities $S_P:=\lmulti |X_i| \text{ for } i \in \{1, \dots, q\}\,\rmulti$ there is a pairwise ordered partition $P'$ consisting of multi-set with cardinalities $S_{P'}=S_P$ that fulfills $TC(P') \leq TC(P)$.
\end{definition}

\begin{theorem}
    If a total cost function is 2-partition ordered minimizable, then it is ordered minimizable.
    \label{thm::ordered_minimizable}
\end{theorem}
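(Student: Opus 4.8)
The plan is to prove the statement by an exchange argument driven by a strictly monotone integer potential. Assume the data are sorted and let $r(x)\in\{1,\dots,n\}$ denote the rank (position) of an element $x$ in this order, with ties broken by a fixed total order refining $\le$. Starting from an \emph{arbitrary} relevant partition $P=\lmulti X_1,\dots,X_q\rmulti$, I would repeatedly replace two clusters that are not pairwise ordered by a pairwise ordered pair of the same sizes, and show that this process (i) never increases $TC$, and (ii) terminates, so that it ends at a pairwise ordered relevant partition of cost at most $TC(P)$. Since $P$ was arbitrary, this immediately yields that the minimum over pairwise ordered relevant partitions equals the minimum over all relevant partitions, which is exactly ordered minimizability. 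Note that this argument never uses optimality of $P$; the monotone potential alone guarantees termination.

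The single replacement step works as follows. Suppose $X_i$ and $X_j$ are not pairwise ordered. Because $TC$ is additive, $TC(P)=C(X_i)+C(X_j)+\sum_{l\neq i,j}C(X_l)$, and I can apply the $2$-partition ordered minimizable property to the sub-universe $X_i\uplus X_j$ (the property is a structural statement about the cost function $C$ and hence holds for this multiset viewed as its own universe). This yields a pairwise ordered bipartition $\lmulti X_i',X_j'\rmulti$ of $X_i\uplus X_j$ with $\lmulti|X_i'|,|X_j'|\rmulti=\lmulti|X_i|,|X_j|\rmulti$ and $C(X_i')+C(X_j')\le C(X_i)+C(X_j)$. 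Replacing $X_i,X_j$ by $X_i',X_j'$ in $P$ gives a new partition $P'$ with $TC(P')\le TC(P)$; moreover the multiset of cardinalities is unchanged, so $P'$ is again relevant.

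For termination I would use the potential $\Phi(P):=\sum_{l}S_l^2$ with $S_l:=\sum_{x\in X_l}r(x)$ the sum of ranks in cluster $X_l$. The key point is that a replacement step changes only the two terms $S_i^2+S_j^2$ and conserves their rank-sum $S_i+S_j=S_{i'}+S_{j'}$; hence $\Phi$ increases precisely when the gap $|S_{i'}-S_{j'}|$ exceeds $|S_i-S_j|$. Among all splits of the fixed rank-multiset of $X_i\uplus X_j$ into the given sizes, the gap is maximized exactly by the rank-contiguous (sorted) split, and I may assume $X_i',X_j'$ is rank-contiguous without changing cost or sizes, since elements sharing a common value may be moved freely between $X_i'$ and $X_j'$. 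Since $X_i,X_j$ were not pairwise ordered, they exhibit a strict value crossing and are therefore not rank-contiguous, so their gap is strictly smaller than the maximal one; consequently $\Phi(P')>\Phi(P)$ strictly. As $\Phi$ is integer-valued and bounded above on the finite set of partitions of $\Omega$, only finitely many replacement steps can occur, so the process halts --- necessarily at a partition with no non-ordered pair, i.e.\ a pairwise ordered one.

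The main obstacle I anticipate is the strict monotonicity of $\Phi$ in the presence of ties: one must argue that ``not pairwise ordered'' forces a genuine (strict) rank crossing and that the sorted representative $X_i',X_j'$ can be taken rank-contiguous, which relies on equal-valued elements contributing identically to $C$. A secondary point worth stating carefully is the legitimacy of invoking the $2$-partition property on the sub-universe $X_i\uplus X_j$ rather than on $\Omega$ itself; this is where the additive structure of $TC$ is essential.
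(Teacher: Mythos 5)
Your overall strategy (local exchange of a non-ordered pair via the 2-partition property, plus a potential to force termination) is reasonable, but the termination argument has a genuine gap. Writing $T=S_i+S_j=S_{i'}+S_{j'}$, you correctly reduce the sign of the change in $\Phi$ to comparing $|S_{i'}-S_{j'}|$ with $|S_i-S_j|$, and you then claim that ``the gap is maximized exactly by the rank-contiguous (sorted) split.'' This is false when the two cluster sizes differ: for sizes $m\neq p$ there are \emph{two} rank-contiguous (i.e.\ pairwise ordered) splits of the same cardinality multiset $\lmulti m,p\rmulti$, and only one of them attains the maximal gap. Concretely, with ranks $\{1,2,3,4\}$ and sizes $1$ and $3$, the non-ordered split $\{2\},\{1,3,4\}$ has gap $|2-8|=6$, while the pairwise ordered split $\{4\},\{1,2,3\}$ has gap $|4-6|=2$. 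The 2-partition property only guarantees that \emph{some} pairwise ordered partition with the same cardinalities does not increase the cost; if the cost function happens to certify only the low-gap contiguous split, your replacement step strictly \emph{decreases} $\Phi$, and since a replacement can destroy the orderedness of the new clusters relative to third clusters, you have not ruled out cycling. The proof as written therefore does not establish termination, which is the entire content the potential was supposed to supply. (Your other steps are fine: invoking the 2-partition property on the sub-universe $X_i\uplus X_j$ is legitimate by additivity of $TC$ --- the paper does the same --- and your tie-handling remark is plausible since $C$ depends only on the multiset of values.)

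For comparison, the paper avoids this issue entirely by replacing the unstructured exchange process with a well-founded induction on the number of clusters $q$: it maintains an accumulator $A_i$ and repeatedly applies the 2-partition property to the pair $(A_i, X_{i+1})$, always retaining as $A_{i+1}$ the \emph{lower} part of the resulting ordered bipartition. After one sweep, $A_{q+1}$ consists of the globally smallest elements of $\Omega$ and can be split off as one cluster without increasing $TC$; the induction hypothesis then orders the remaining $q$ clusters on $\Omega\setminus A_{q+1}$. Because progress is measured by the number of clusters already ``finalized'' rather than by a cost- or rank-based potential, no termination argument is needed. If you want to rescue your exchange approach, you would need either a potential that is guaranteed to move in one direction under \emph{whichever} ordered bipartition the 2-partition property happens to certify, or a scheduling of the exchanges (such as the paper's sweep that always peels off the minimum) that makes progress structurally rather than via a potential.
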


In the proof of \Cref{thm::ordered_minimizable} we show that if a cost function is $2$-partition ordered minimizable it is $q$-partition ordered minimizable for all $q$. This implies that it is ordered minimizable.
For the full proof see \Cref{sec::proof_ordered_minimizable} in the appendix.
Using \Cref{thm::ordered_minimizable}, it can be shown that the following cost functions are ordered minimizable.
\begin{corollary}
    The total costs $TC=\sum C(X)$ associated to following cost functions $C$ are ordered minimizable:
    \begin{itemize}
    \item Sum of Squares Error $C(X)= SSE(X)=\sum_{x\in X}(x-\bar X)^2$
    \item Sum of Absolute Error $$C(X)= SAE(X)=\sum_{x\in X}|x- M(X)|,$$ where $M$ is the median of $X$
    \item Maximum distance $$C(X)= C_\infty(X) = \max_{x \in X} |x - \overset{\infty}{X}|,$$ where $\overset{\infty}{X} = (\max(X)+\min(X))/{2}$
    \item Round up error $C(X)=C_\uparrow(X)=\sum_{x\in X}|x- \max(X)|$
    \item Round down error $C(X)=C_\downarrow(X)=\sum_{x\in X}|x- \min(X)|$
    \end{itemize}
    \label{lem:costs_om}
\end{corollary}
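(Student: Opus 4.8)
The plan is to reduce the corollary to a single uniform claim and then verify that claim for each of the five cost functions. By \Cref{thm::ordered_minimizable} it suffices to show that each cost function is $2$-partition ordered minimizable. Concretely, fixing the sorted data $x_1\le\dots\le x_n$, I would show that for every bipartition $\lmulti A,B\rmulti$ with $|A|=s$ the associated \emph{threshold} (equivalently contiguous, pairwise ordered) bipartition $\lmulti L,R\rmulti$ with $L=\lmulti x_1,\dots,x_s\rmulti$ and $R=\lmulti x_{s+1},\dots,x_n\rmulti$ satisfies $C(L)+C(R)\le C(A)+C(B)$. Since $\lmulti L,R\rmulti$ has the same multiset of cardinalities as $\lmulti A,B\rmulti$, this is exactly $2$-partition ordered minimizability, after which \Cref{thm::ordered_minimizable} yields ordered minimizability for the full $q$-part problem. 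Note that the definition only asks for \emph{some} matching pairwise ordered $P'$, so I am always free to pick whichever of the two contiguous partitions with the correct size multiset is more convenient.

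For SSE the cleanest route is a convexity argument that avoids local exchanges entirely. Writing $S_A=\sum_{x\in A}x$ and $T=\sum_{x\in\Omega}x$, the identity $SSE(X)=\sum_{x\in X}x^2-\tfrac{1}{|X|}\bigl(\sum_{x\in X}x\bigr)^2$ gives $C(A)+C(B)=\sum_{x\in\Omega}x^2-\phi(S_A)$ with $\phi(S_A)=\tfrac{S_A^2}{s}+\tfrac{(T-S_A)^2}{n-s}$. Because $\phi$ is convex in $S_A$, minimizing the total cost amounts to maximizing $\phi$, and a convex function over the finite set of attainable subset sums is maximized at the extreme attainable values of $S_A$, namely the sum of the $s$ smallest or the $s$ largest points; both extremes are realized precisely by threshold partitions, proving the claim for SSE. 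I would then try to reuse this ``the extreme/contiguous configuration is optimal'' theme for the remaining costs.

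The other costs admit similar but tailored reductions. For $C_\infty$ the farthest point from the midrange is an endpoint, so $C_\infty(X)=\tfrac12(\max X-\min X)$ is half the range; minimizing $C(A)+C(B)$ then minimizes the total range, which a gap-counting argument settles: each inter-point gap $x_{i+1}-x_i$ contributes to a cluster's range exactly when that cluster has points on both sides of it, a zero-contribution gap forces the partition to be contiguous at that cut, and so contiguous (threshold) clusters are optimal. For round-up/round-down I would linearize using $C_\uparrow(X)=|X|\max X-\sum_{x\in X}x$ and the symmetric $C_\downarrow(X)=\sum_{x\in X}x-|X|\min X$; after dropping the fixed term $T$ this reduces to a statement about $s\max A+(n-s)\max B$ (respectively the minima) that a short exchange argument placing the largest (respectively smallest) elements together resolves. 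For SAE I would use the nested-pair representation $SAE(X)=\sum_{i}(y_{|X|+1-i}-y_i)$ of the sum of absolute deviations from the median, where $y_1\le\dots\le y_{|X|}$ are the sorted elements of $X$, which again exhibits the cost as a sum of nested ranges amenable to the same contiguity argument.

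The main obstacle is that a naive single swap of an out-of-order pair does not uniformly decrease the cost; for SSE the first-order change is proportional to the difference of the two cluster means and can have either sign, so I cannot simply iterate ``fix one inversion at a time'' across all five costs. This is precisely why I would rely on global/extremal arguments (convexity for SSE, total-range minimization for $C_\infty$) rather than purely local exchanges. I expect SAE to be the genuinely delicate case: the median is neither linear nor unique for even cluster sizes, so carefully establishing the nested-range identity and verifying that every rearrangement toward the threshold partition is non-increasing requires the most work, and it is the step I would write out in full detail.
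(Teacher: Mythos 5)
Your overall skeleton is exactly the paper's: invoke \Cref{thm::ordered_minimizable} and verify $2$-partition ordered minimizability cost by cost, and your remark that you may choose either of the two contiguous bipartitions with the correct size multiset is precisely the freedom the paper also exploits. Your SSE argument is a genuinely different and cleaner route: the paper performs an exchange argument (swapping sub-multisets $\tilde A$ and $\tilde B$ between the two clusters and tracking the sign of $s_{\tilde A}-s_{\tilde B}$ against a difference of cluster means), whereas your reduction to maximizing the convex function $\phi(S_A)=S_A^2/s+(T-S_A)^2/(n-s)$ over the range of attainable subset sums settles the claim in one stroke, since both extreme subset sums are realized by threshold partitions. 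The gap-counting argument for $C_\infty$ and the linearization for the round-up/round-down costs are also sound and match the paper's computations in substance.

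The genuine gap is SAE, and you have in effect conceded it by deferring the ``delicate case'' rather than proving it. The difficulty is not mere bookkeeping: writing $SAE(X)$ as a sum of nested ranges does not let you reuse the $C_\infty$ contiguity argument, because you must compare the collection of nested ranges of $L$ and $R$ \emph{combined} against that of $A$ and $B$ combined, and the right comparison only emerges after merging the lower half-sets of the two clusters into one sorted list ($c_1=l_1\cup r_1$ versus $d_1=a_1\cup b_1$), doing the same for the upper half-sets, and then proving the term-by-term domination $x_{c_2(i)}-x_{c_1(i)}\le x_{d_2(i)}-x_{d_1(i)}$ via the index inequalities $c_2(i)\le d_2(i)$ and $c_1(i)\ge d_1(i)$. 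Moreover, which of the two threshold partitions works is \emph{not} a choice you can postpone to convenience or read off from the sizes alone: the paper must take $|L|=\min(|A|,|B|)$ or $|L|=\max(|A|,|B|)$ depending on whether the smallest element of the merged upper halves $d_2$ has index at most $\max(|a_1|,|b_1|)+1$, and the two sub-cases need different inequalities. None of this appears in your proposal, so as written the SAE item of the corollary remains unproved.
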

For the proof of \Cref{lem:costs_om} see \Crefrange{sec::sse_om}{sec::round_om}.
The sum of absolute error cost has been previously proposed under the name absolute deviation from median (ADM)~\cite{kabir2011microdata}.
For the scenario in which all values $x_i$ are integers, a sum-of-squares distortion metric with rounded mean was also proposed~\cite{mortazavi2020improved} to be able to perform all operations on integers only.

When dealing with any ordered minimizable cost function, we thus can sort the input data either in ascending or descending order (ascending is usually preferred for numerical reasons).
Depending on the data, sorting can be done in $O(n \log n)$ or $O(n)$ (e.g. radix-sort on small integers).
Currently, sorting is part of any practical algorithm to solve UM exactly and we thus consider the time complexity of all algorithms on sorted data, thereby neglecting the potential additional cost of $O(n \log n)$.

If cost functions are 1) ordered minimizable, and 2) can be computed in $O(1)$ based on an $O(n)$ preprocessed data, we solve UM in $O(n^2)$ time and $O(n)$ space.
This can be done by adapting the cost function of the classic algorithm for LWSS problems such that large cluster costs are returned if the cluster size is smaller than $k$:

\begin{align}
    C_{\text{adapt}}(i,j) = \begin{cases}
    \text{VAL} & j -i < k,\\
    C(i,j) & \text{else}.\\
    \end{cases}
    \label{eq::cases1}
\end{align}
Here $\text{VAL}$ is an arbitrary large enough number, such that the cluster cost $C$ is always smaller than $\text{VAL}$, i.e., $C(i,j) < \text{VAL}$ for all valid choices $i<j$.
The following lemma ensures that we can use this adaptation for all previously considered cost functions.
\begin{lemma}
    The cost functions introduced in \cref{lem:costs_om} can be computed in $O(1)$ using $O(n)$ time for preprocessing and consuming no more than $O(n)$ additional memory.
    \label{lem:costs_O1}
\end{lemma}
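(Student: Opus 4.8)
The plan is to exhibit, for each of the five cost functions, a closed-form expression for $C(i,j)$ that can be evaluated in constant time after a single $O(n)$ preprocessing pass storing a constant number of prefix-sum arrays of length $n$. The unifying idea is that sorting the data makes every quantity appearing in these cost functions (means, medians, minima, maxima, and sums of signed deviations) a simple function of partial sums of $x_l$ and $x_l^2$ over contiguous index ranges. Thus I would first precompute $S_1[m] := \sum_{l=1}^m x_l$ and $S_2[m] := \sum_{l=1}^m x_l^2$, with $S_1[0]=S_2[0]=0$, which takes $O(n)$ time and $O(n)$ memory. Any contiguous sum is then available as $\sum_{l=i+1}^{j} x_l = S_1[j]-S_1[i]$, and analogously for squares, in $O(1)$.

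Second, I would derive the constant-time formulas directly. For SSE the standard bias--variance identity gives $SSE(i,j) = (S_2[j]-S_2[i]) - (S_1[j]-S_1[i])^2/(j-i)$. For the round-up and round-down errors, sortedness gives $\max = x_j$ and $\min = x_{i+1}$, so $C_\uparrow(i,j) = (j-i)\,x_j - (S_1[j]-S_1[i])$ and $C_\downarrow(i,j) = (S_1[j]-S_1[i]) - (j-i)\,x_{i+1}$. For the maximum distance, sortedness yields the center $\overset{\infty}{X} = (x_{i+1}+x_j)/2$ directly, whence $C_\infty(i,j) = (x_j - x_{i+1})/2$, which needs only two array lookups and no prefix sums at all.

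Third, for SAE I would locate the median by index arithmetic: in a sorted block of length $j-i$ the median sits at a fixed offset from $i$, say $m = i + \lceil (j-i)/2\rceil$, computable in $O(1)$. Splitting the sum at $m$ turns the absolute values into signed terms, giving $SAE(i,j) = \bigl(x_m(m-i) - (S_1[m]-S_1[i])\bigr) + \bigl((S_1[j]-S_1[m]) - x_m(j-m)\bigr)$, again $O(1)$ once $S_1$ is available.

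I expect the only point requiring care --- the \emph{main obstacle}, though a mild one --- to be the median case: one must fix a canonical median index for even-length clusters and check that the chosen offset is consistent with the definition of $M(X)$ used in \Cref{lem:costs_om}, so that the split-sum formula computes exactly $\sum_{x\in X}|x - M(X)|$. Everything else is routine algebra over the prefix sums. Summing up, each formula uses a constant number of arithmetic operations and lookups into arrays of size $O(n)$, giving $O(1)$ evaluation time, $O(n)$ preprocessing time, and $O(n)$ additional memory, as claimed.
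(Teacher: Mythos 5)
Your proposal is correct and follows essentially the same route as the paper: precompute prefix sums of $x_l$ (and $x_l^2$ for SSE) in $O(n)$, then evaluate each cost by a closed-form expression in $O(1)$ using sortedness to identify the min, max, and median positions; your split-at-the-median formula for SAE is algebraically identical to the paper's ``sum of the upper half minus sum of the lower half'' expression, since the $x_m$ terms cancel exactly as you anticipate in the even-length case.
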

For the proof of \Cref{lem:costs_O1} see \Cref{sec:O1}.

Many cost functions are even more structured, which allows faster algorithms as we will see in the next subsection.

\subsection{Univariate microaggregation for cost functions where splitting is beneficial}
\label{sec::Okn}
If the ordered minimizable cost function also has the splitting is beneficial property (i.e., \cref{eq::splitting} holds), we can improve upon the $O(n^2)$ time complexity for the classical algorithm. 
The main insight is that we do not need to consider clusters that contain less than $k$ or more than $2k-1$ points.
We can ``forbid'' these clusterings by adapting the normal cluster cost. 
For too small/too large clusters we return large costs which ensure the forbidden assignments are never selected.
Thus we adapt any cost function with the splitting is beneficial property as
\begin{align}
    C_{\text{adapt}}(i,j) = \begin{cases}
    \text{VAL} & j -i < k,\\
    \text{VAL} & j -i \geq 2k,\\
    C(i,j) & \text{else}.\\
    \end{cases}
    \label{eq::cases2}
\end{align}
Similar to LWSS we can define a matrix $\tilde{\matr{M}}$ with entries $ \tilde{\matr{M}}_{i,j} =\min_{l<i}  m_{li} + C_{\text{adapt}}(i,j)$.
If we consider the structure of the $\tilde {\matr{M}}$ matrix (see colors in \Cref{fig:staggered}), we find that there are only $k$ entries per column that need to be considered (green entries in \Cref{fig:staggered}). 
In particular, when computing the minimum value in each column of the matrix $\tilde{\matr{M}}$ only entries ${\tilde{\matr{M}}}_{ij}$ with $j-2k+1\leq i\leq j-k$ are necessary to be computed to find the minima of column $j$.
This allows a complexity improvement of the classic algorithm from $O(n^2)$ to $O(kn)$.
We conclude this subsection by noting the introduced considerations are applicable to the presented cost functions.
\begin{corollary}
    All the costs presented in \Cref{lem:costs_om} have the splitting is beneficial property.
\end{corollary}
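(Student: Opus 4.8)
The plan is to verify the defining inequality \cref{eq::splitting} directly for each of the five cost functions. Writing the cluster to be split as $X=\lmulti x_{a+1},\dots,x_c\rmulti$ and the two ordered pieces as $X_1=\lmulti x_{a+1},\dots,x_b\rmulti$ and $X_2=\lmulti x_{b+1},\dots,x_c\rmulti$ (so that $C(a,b)=C(X_1)$, $C(b,c)=C(X_2)$ and $C(a,c)=C(X)$), the claim $C(a,b)+C(b,c)\le C(a,c)$ becomes $C(X_1)+C(X_2)\le C(X)$: splitting an ordered cluster never increases the total distortion. Because the pieces are pairwise ordered, we have $\max(X_1)=x_b\le x_{b+1}=\min(X_2)$ together with $\min(X)=\min(X_1)$ and $\max(X)=\max(X_2)$, and this structural fact is what every case exploits.

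For SSE and SAE I would use a minimizer argument rather than orderedness. For SSE the representative $\bar X$ is the unique minimizer of $\mu\mapsto\sum_{x\in X}(x-\mu)^2$; hence restricting the sum to $X_1$ and replacing $\bar X$ by the sub-cluster mean $\bar X_1$ can only decrease it, i.e. $\sum_{x\in X_1}(x-\bar X_1)^2\le\sum_{x\in X_1}(x-\bar X)^2$, and likewise for $X_2$. Summing the two inequalities and using $\sum_{x\in X_1}(x-\bar X)^2+\sum_{x\in X_2}(x-\bar X)^2=C(X)$ yields $C(X_1)+C(X_2)\le C(X)$. The SAE case is verbatim the same with the median $M$ in place of the mean, relying on the fact that the median minimizes $\mu\mapsto\sum_{x\in X}|x-\mu|$.

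For the remaining three functions I would argue by direct computation, since each has a closed form on a sorted cluster. For the round-up cost $\max(X)=\max(X_2)=x_c$, so
\[
C_\uparrow(X)-C_\uparrow(X_1)-C_\uparrow(X_2)=\sum_{x\in X_1}(x_c-x_b)=|X_1|\,(x_c-x_b)\ge 0,
\]
using $x_c\ge x_b$; the round-down case is symmetric and contributes $|X_2|\,(x_{b+1}-x_{a+1})\ge 0$. For the maximum distance I would first record the simplification $C_\infty(X)=(\max(X)-\min(X))/2$ (the midpoint is equidistant from the two extremes, which dominate every interior deviation), after which the difference $C_\infty(X)-C_\infty(X_1)-C_\infty(X_2)$ telescopes to $(\min(X_2)-\max(X_1))/2\ge 0$, again by pairwise orderedness. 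I expect the only mild obstacle to be this last reduction: one must establish that the $\ell_\infty$ cost collapses to half the range before the telescoping becomes transparent. Otherwise the proof splits cleanly into two mechanisms -- optimality of the mean/median for SSE and SAE, and an explicit nonnegative remainder driven by orderedness for the three range-type costs.
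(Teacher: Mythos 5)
Your proof is correct, but it takes a different route from the paper's. The paper disposes of this corollary in one line: it invokes the general implication (stated in \Cref{sec:LWSS}) that any non-negative concave cost with $C(i,i+1)=0$ automatically satisfies \cref{eq::splitting}, and then points to \Cref{lem:costs_concave} for concavity of the five costs; all the real work is thus delegated to the quadrangle-inequality proofs in the appendix. You instead verify \cref{eq::splitting} directly and separately for each cost function, never touching concavity. Both arguments are sound. Your per-cost verification is more elementary and self-contained, and for SSE and SAE it is actually stronger than needed: the minimizer argument (the mean minimizes $\mu\mapsto\sum_{x}(x-\mu)^2$, the median minimizes $\mu\mapsto\sum_{x}|x-\mu|$) shows that splitting into \emph{arbitrary} sub-multisets never increases the cost, with no use of sortedness; orderedness only enters for the three range-type costs, where your telescoping computations are exactly right (including the reduction $C_\infty(X)=(\max X-\min X)/2$, which the paper also records in its appendix). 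What the paper's route buys is economy -- concavity must be proved anyway for the $O(n)$ results, so splitting-is-beneficial comes for free -- and uniformity, since it covers any future concave cost without a new case analysis; what your route buys is independence from the concavity machinery and a sharper statement in the SSE/SAE cases.
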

This is true as they are concave (see next subsection, \Cref{lem:costs_concave}), they have $C\geq 0$ and $C(i,i+1)=0$.

\subsection{Univariate microaggregation for concave cost functions}
Besides the classical algorithm used to solve least weight subsequence problems, more advanced algorithms~\cite{wilber1988concave, galil1989linear} are available to solve \emph{concave} least weight subsequence problems in $O(n)$.
We observe that those algorithms will also work if the associated cluster cost matrix ${\matr{M}}^T$ is only totally monotone.
The previous definition of $C_{\text{adapt}}$ (\cref{eq::cases1}) needs to be slightly modified such that the matrix ${\tilde{\matr{M}}}^T$ is totally monotone for these algorithms to be applicable.

\begin{align}
    C_{\text{adapt}}(i,j) = \begin{cases}
    \text{VAL}(i) & j -i < k,\\
    C(i,j) & \text{else}.\\
    \end{cases}
    \label{eq::cases3}
\end{align}
We again choose the large costs $\text{VAL}(l)$ large enough such that $C(i,j) < \text{VAL}(l)$ for any $l$ ($1\leq l\leq n$) and any valid $i$, $j$.
Further, we require $\text{VAL}(l) < \text{VAL}(m)$ for $l < m$ to ensure total monotonicity of the transposed cluster cost matrix $\tilde{\matr{M}}^T$.

If our cost function is concave this definition of the adapted cost $C_{\text{adapt}}$ ensures that, the \emph{transposed} cluster cost matrix $\tilde{\matr{M}}^T$ is totally monotone (see appendix for a proof, a visualization of such a matrix is provided in \Cref{fig:staggered} if we assume the orange entries are also green).
For cost functions $C$ in that can be evaluated in $O(1)$ time (for $O(n)$ preprocessed data) we can use $C_{\text{adapt}}$ as cost function and solve this problem using dynamic programming algorithms, e.g., via the algorithm by Wilber~\cite{wilber1988concave}, or an algorithm proposed by Galil and Park~\cite{galil1989linear}, to obtain a solution for univariate microaggregation on sorted input in $O(n)$ time and space.

For the special case of the sum of squares cost, it has been previously shown, that the cost is concave~\cite{gronlund2017fast, wu1991optimal} and can be computed in $O(1)$ (using $O(n)$ time and space for preprocessing) using a cumulative sum approach.
These results have been used to speed up the 1D k-means problem~\cite{gronlund2017fast}, but have not yet been used to solve univariate microaggregation tasks.
Let us assert that all of the considered cost functions in \Cref{lem:costs_om} are indeed concave:

\begin{lemma}\label{lem:costs_concave}
    The cost functions from \Cref{lem:costs_om} are concave.
\end{lemma}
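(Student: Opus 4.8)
The plan is to verify the quadrangle inequality \cref{eq:monge}, i.e.\ $C(a,c)+C(b,d)\le C(a,d)+C(b,c)$ for all $0\le a<b<c<d\le n$, separately for each of the five cost functions, recalling that on sorted data $C(i,j)$ is the cost of the interval $\lmulti x_{i+1},\dots,x_j\rmulti$. Throughout I will use the standard fact that a function on intervals is concave exactly when it satisfies the \emph{adjacent} instances $C(i,j)+C(i+1,j+1)\le C(i,j+1)+C(i+1,j)$, the general inequality following by summing adjacent ones. For the round and maximum-distance costs I will instead argue the full inequality directly, since these admit simple closed forms.

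\emph{Round-down, round-up, and maximum distance.} Here the closed forms are $C_\downarrow(i,j)=\sum_{l=i+1}^{j}(x_l-x_{i+1})$, $C_\uparrow(i,j)=\sum_{l=i+1}^{j}(x_j-x_l)$, and $C_\infty(i,j)=(x_j-x_{i+1})/2$ (the farthest point from the midpoint is always an endpoint). In each case the cost splits into a part that is additive over the index set $\{i+1,\dots,j\}$ (the $\sum_l x_l$ pieces) plus a boundary term. The additive part cancels identically in \cref{eq:monge}, because the index multisets on the two sides coincide, so the inequality collapses to a scalar statement: for round-down it reduces to $(d-c)\,x_{a+1}\le (d-c)\,x_{b+1}$, i.e.\ $x_{a+1}\le x_{b+1}$; for round-up to $x_c\le x_d$; and for $C_\infty$ the two sides are equal. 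All of these hold because the data is sorted.

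\emph{SSE.} I reduce to the adjacent case and set $S=\lmulti x_{a+2},\dots,x_c\rmulti$, $\ell=x_{a+1}$, $r=x_{c+1}$, so that the adjacent inequality rearranges into $\mathrm{SSE}(S\cup\ell)-\mathrm{SSE}(S)\le \mathrm{SSE}(S\cup\{\ell,r\})-\mathrm{SSE}(S\cup r)$, a comparison of the marginal cost of inserting the smallest point with and without the largest point present. Using the marginal identity that adjoining a point $y$ to a block of size $m$ and mean $\mu$ raises the SSE by $\tfrac{m}{m+1}(y-\mu)^2$, both sides take this form; since $\ell$ lies to the left of every element, adding the large point $r$ only increases the block mean and hence the distance $|\ell-\mu|$, while the prefactor $\tfrac{m}{m+1}$ is increasing in $m$. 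Both effects push in the same direction and close the case.

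\emph{SAE (the main obstacle).} I use the median identity that for sorted $y_1\le\dots\le y_m$ one has $\sum_i |y_i-\mathrm{med}|=\sum_{t=1}^{\lfloor m/2\rfloor}(y_{m+1-t}-y_t)$, equivalently the sum of the upper $\lfloor m/2\rfloor$ values minus the sum of the lower $\lfloor m/2\rfloor$ values. This eliminates the explicit median and writes $C(i,j)$ as a signed sum of the $x_l$ with size-dependent weights $\pm1$ (and weight $0$ on the middle element when the length is odd). Substituting this into the adjacent form of \cref{eq:monge} and again invoking that the data is sorted reduces each case to comparisons of individual sorted values. I expect the parity bookkeeping to be the crux: the position of the zero-weight middle element shifts with the parity of each of the four interval lengths, so the verification splits into the parity cases of $j-i$, in each of which the upper/lower half-sums must be re-aligned before monotonicity of the sorted sequence closes the inequality.
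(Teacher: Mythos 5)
Your overall route is sound but genuinely different from the paper's in two respects. First, you reduce the quadrangle inequality to its \emph{adjacent} instances $C(i,j)+C(i+1,j+1)\le C(i,j+1)+C(i+1,j)$ and sum; the paper instead verifies the full inequality for arbitrary $a<b<c<d$ in every case. For the round-up, round-down and maximum-distance costs the two computations coincide (the additive part cancels and one is left with $x_c\le x_d$, $x_{a+1}\le x_{b+1}$, and an exact equality, respectively, exactly as in the appendix). For SSE your argument is a real addition: the paper's appendix section on SSE contains no proof at all and the main text simply cites prior work, whereas your adjacent-case argument via the incremental identity $\mathrm{SSE}(X\cup\{y\})=\mathrm{SSE}(X)+\tfrac{m}{m+1}(y-\bar X)^2$ is self-contained and correct --- with $\ell$ below all of $S$ and $r$ above, adjoining $r$ increases both the prefactor and $|\ell-\mu|$, so the marginal cost of $\ell$ only grows. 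For SAE both you and the paper use the half-sums identity $\mathrm{SAE}=\sum_{t\le\lfloor m/2\rfloor}(y_{m+1-t}-y_t)$, but the paper then wrestles with the general four-index inequality via remainder-set inclusions, while your adjacent reduction collapses the casework to the parity of a single length $m$: for $m$ even the two sides are equal, and for $m$ odd the difference telescopes to $y_{(m+1)/2}-y_{(m+3)/2}\le 0$. That is a cleaner proof than the paper's.

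The one genuine shortcoming is that you do not actually carry out the SAE verification: you state that you ``expect the parity bookkeeping to be the crux'' and predict the shape of the answer, but a prediction is not a proof, and this is the only case among the five where the conclusion is not immediate from what you wrote. The calculation does close (as sketched above), so nothing fails, but as submitted the SAE case is a plan rather than an argument and needs to be written out. A second, minor point: when invoking the adjacent-case reduction you should note that for the staircase range $a<b<c<d$ every adjacent instance $(i,i+1,j,j+1)$ appearing in the telescoping sum satisfies $i+1<j$, so all the terms are legitimate instances of the concavity condition as defined in the paper.
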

For the proof of \Cref{lem:costs_concave} we refer to the appendix \Cref{sec:costs_monge}.

Putting together \Cref{lem:costs_om}, \Cref{lem:costs_O1}, and \Cref{lem:costs_concave} we obtain the following result:
\begin{theorem}
    On sorted data, we can compute optimal univariate microaggregation in $O(n)$ time and space for the cost functions introduced in \Cref{lem:costs_om}.\label{thm:on}
\end{theorem}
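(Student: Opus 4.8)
The plan is to assemble this result directly from the three preceding lemmas, using the least weight subsequence (LWSS) machinery set up in \Cref{sec:LWSS} and the reformulation of \Cref{sec::ordered_main}. First I would invoke \Cref{lem:costs_om} to conclude that each of the listed cost functions is ordered minimizable, so that after sorting the data in ascending order it suffices to search over pairwise ordered partitions only. This identifies an optimal univariate microaggregation with an optimal solution of an LWSS instance whose segment cost is the cluster cost $C(i,j)=C(\lmulti x_{i+1},\dots,x_j\rmulti)$, and where the minimum-size constraint is imposed through the adapted cost $C_{\text{adapt}}$ of \cref{eq::cases3}, which returns a large value $\text{VAL}(i)$ whenever $j-i<k$ and the genuine cost otherwise.

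The core of the argument is then to show that the transposed conditional-cost matrix $\tilde{\matr{M}}^T$, with entries $\tilde{\matr{M}}_{ij}=\min_{l<i} m_{li}+C_{\text{adapt}}(i,j)$, is totally monotone, so that the $O(n)$ dynamic-programming algorithms of \cite{wilber1988concave, galil1989linear} become applicable. I would proceed by splitting each column into its \emph{feasible} part ($j-i\ge k$) and its \emph{forbidden} part ($j-i<k$). On the feasible part the entries come from the genuine cost $C(i,j)$; by \Cref{lem:costs_concave} the cost function satisfies the quadrangle inequality \cref{eq:monge}, so $\matr{C}$ is Monge and $\matr{C}^T$ is totally monotone, and adding the column-constant $\min_{l<i} m_{li}$ preserves total monotonicity as observed in \Cref{sec:LWSS}. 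The remaining task is to verify the $2\times 2$ condition whenever one or both compared entries fall in the forbidden band; here the requirement $\text{VAL}(l)<\text{VAL}(m)$ for $l<m$ is exactly what is needed, since it makes the forbidden entries decrease down each column in a way that never manufactures a spurious row-minimum violating the totally-monotone ordering. This monotone-$\text{VAL}$ bookkeeping is the main obstacle, because the $2\times 2$ window can straddle the boundary between the forbidden and feasible regions in several ways; I would organise the verification as a short case analysis on whether the lower entries $\gamma,\delta$ lie inside or outside the forbidden band, reducing each case either to the feasible (Monge) situation or to the strict ordering supplied by the $\text{VAL}$ values.

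Finally, having established total monotonicity, I would invoke \Cref{lem:costs_O1} to guarantee that every segment cost $C(i,j)$ is computable in $O(1)$ after an $O(n)$ preprocessing step and using only $O(n)$ additional memory. SMAWK~\cite{aggarwal1986geometric} then extracts all column minima of the implicitly defined totally monotone matrix in linear time per pass, and the Wilber or Galil--Park recursion completes the LWSS solve in $O(n)$ total time and $O(n)$ space. Backtracking through the stored $\argmin$ indices, as described for \Cref{alg::classic}, recovers an optimal clustering within the same bounds, which yields the claimed $O(n)$ time and space for all cost functions of \Cref{lem:costs_om}.
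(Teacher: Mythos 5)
Your proposal is correct and follows essentially the same route as the paper: it assembles the theorem from \Cref{lem:costs_om}, \Cref{lem:costs_O1}, and \Cref{lem:costs_concave}, uses the adapted cost of \cref{eq::cases3} with strictly increasing $\text{VAL}$ values to make $\tilde{\matr{M}}^T$ totally monotone via a case analysis on the feasible versus forbidden band (matching the paper's appendix argument in \Cref{sec::total_mon_M}), and then applies the SMAWK-based algorithms of Wilber and Galil--Park. No substantive differences from the paper's proof.
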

Naturally, \Cref{thm:on} extends to unsorted data if we pay at most an additional $O(n \log n)$ cost for sorting.

\subsection{Univariate microaggregation for concave costs where splitting is beneficial}
If a cost function is concave and splitting is beneficial we can combine restricting the cluster size and preserving total monotonicity to arrive at the following adapted cost
\begin{align}
    C_{\text{adapt}}(i,j) = \begin{cases}
    \text{VAL}(i) & j -i < k\\
    \text{VAL}(-i) & j -i \geq 2k\\
    C(i,j) & \text{else}\\
    \end{cases}
    \label{eq::cases4}
\end{align}
As before $\text{VAL}(l)$ should be large enough such that $C(i,j) < \text{VAL}(l)$ for all valid $i$, $j$, but this time not only for positive $l$, but also for negative $l$ with values beteen $-n\leq l\leq n$.
We require again that $\text{VAL}(l) < \text{VAL}(m)$ for $l < m$ to ensure total monotonicity of the transposed cluster cost matrix $\tilde{\matr{M}}^T$. 
We proof that for this definition of the adapted cost $C_{\text{adapt}}$ the \emph{transposed} cluster cost matrix $\tilde{\matr{M}}^T$ is totally monotone in appendix \Cref{sec::total_mon_M}.
The structure of the cluster cost matrix correspond to this definition of $C_{\text{adapt}}$ is displayed in \Cref{fig:staggered}.

When a cost function has both the splitting is beneficial property and is concave, it is possible to significantly decrease the runtime of the classical algorithm introduced in \Cref{sec::Okn}.
To this end it is instrumental to observe that in totally monotone matrices, the positions of the minima of the rows are non decreasing (see \Cref{alg::classic} for pseudo-code).
Although this does not provide an asymptotic runtime improvement, it greatly decreases empirical runtime for larger $k$ (see \Cref{fig::baselines}).

By incorporating the cluster size constraints, it is moreover possible to improve the generic algorithms for least weight subsequence problems~\cite{wilber1988concave, galil1989linear} for concave cost functions with the splitting is beneficial property. 
We present an algorithm for this problem in the next subsection.

\subsection{The staggered algorithm}
When solving UM, the algorithm by Wilber~\cite{wilber1988concave} and the algorithm by Galil and Park~\cite{galil1989linear} are both generic dynamic programming algorithms that work with any kind of ordered minimizable concave cost function.
They achieve the linear running time by repeatedly calling the SMAWK algorithm\cite{aggarwal1986geometric} to compute the minima of rows of submatrices of the totally monotone matrix $\matr{M}^T$.
These algorithm only use the concavity and are oblivious to the restriction on the cluster size and the resulting structure of the $\matr{\tilde M}^T$ matrix (compare green entries in \Cref{fig:staggered}).
In contrast, we have the $O(kn)$ algorithm~\cite{domingo2008polynomial} or the adapted classical algorithm (\Cref{sec::Okn}) which are unaware of the concavity of the cost function and use the restrictions on cluster sizes to achieve speed.
By using both the restrictions on the cluster sizes and the concavity of the cost function we can obtain an algorithm that empirically runs faster than the other $O(n)$ algorithm on UM tasks, and the asymptotic runtime is still $O(n)$.
We present the pseudo code for one such algorithm, ``the staggered algorithm'', in \Cref{alg::staggered}.
The main idea of the staggered algorithm is to apply the SMAWK algorithm to matrices along the diagonals that correspond to valid cluster assignments (see \Cref{fig:staggered} for a visualization of this idea).
For UM tasks, the algorithm is empirically faster as fewer calls of $C_\text{adapt}(i,j)$ are wasted on combinations of $i,j$ which are forbidden by cluster size constraints.
Lastly, the staggered algorithm is conceptually simpler than the other two $O(n)$ algorithms which employ a guessing and recovery strategy to cope with unavailable entries $\min_{l<i} m_{li}$. 
For the staggered algorithm, no such guessing needs to be employed, as all necessary entries are available when making calls to the SMAWK algorithm.

This algorithm has space and runtime complexity of $O(n)$ for cost functions that can be computed in $O(1)$ using $O(n)$ preprocessing.
To see this, not that any call to the SMAWK algorithm is $O(k)$ and there at most $\ceil{\frac{n}{k}}$ calls to the SMAWK algorithm.
This means that all the calls to the SMAWK algorithm are at most $O(n)$.

\begin{algorithm}
    \KwIn{sorted array $v$, minimum size $k$, cost function calculator $C$}
    \KwOut{array implictly representing the optimal univariate microaggregation of v}
    let $n=\text{length}(v)$\\
    \If{$n\leq 2k-1$}{
        \Return all zeros array of length $n$
    }
    C.do\_preprocessing(v, k)\\
    SMAWK.cost = C\\
    \For{$i \in \{k,\dots, 2k-1\}$}{
        SMAWK.MinTotalCost[i] = C.calc(0, i)\\
        SMAWK.ArgminTotalCost[i] = 0\\
    }
    SMAWK.col\_min($2k, \min(3k-1,n), k, 2k-1$)\\
    \If {$n\leq 3k-1$}{
    \Return SMAWK.ArgminTotalCost
    }
    $f, R = (n-3k+1)\, \text{divmod}\, k$\\
    \For{$i \in \{3,\dots, 2+f\}$}{
        SMAWK.col\_min($ik, (i+1)k-1, (i-2)k+1, ik-1$)
    }
    \If{$R>0$}{
        $i=3+f$\\
        SMAWK.col\_min($ik, n, (i-2)k+1, n-1$)
    }
    
    \Return SMAWK.ArgminTotalCost
    
\caption{Pseudo code for the staggered algorithm. This algorithm uses both the concavity of the cost function  through the use of the SMAWK algorithm and the restrictions, that clusters are of size at least $k$ and at most $2k-1$.
When calling SMAWK.col\_min(i,j,k,l), the SMAWK algorithm is applied to compute the column minima of a submatrix of the implicitly defined cost matrix $\matr{M}$ which contains columns $i$ through $j$ and rows $k$ to $l$ (see boxes in \Cref{fig:staggered}). After executing the SMAWK algorithm, the minimal total cost for each column is stored in MinTotalCost and the row that corresponds to that value is stored in ArgminTotalCost.}
\label{alg::staggered}
\end{algorithm}

\begin{figure}[t]
    \begin{tikzpicture}[baseline=(A.center)]
        \tikzset{BarreStyle/.style = {opacity=.4,line width=3.6 mm,line cap=round,color=#1}}
        \mymatrix{8}{2}
    \end{tikzpicture}
    \caption{Visualization of how the staggered algorithm processes the cluster cost matrix ${\tilde M}$ for a cost function that is concave and has the splitting is beneficial. The matrix is shown for $n=8$ and $k=2$.
    Red entries correspond to clusterings which contain clusters that are to small. Orange entries correspond to entries that contain unnecessarily large clusters and green entries need to be processed to find an optimal univariate microaggregation clustering.
    In \Cref{alg::staggered} lines 5-7 initialize $m_{02}$ and $m_{03}$. Line 9 processes the first submatrix indicated by the square overlay in columns 4 and 5. For most inputs the loop in lines 12 \& 13 processes most of the matrix (here only columns 6 \& 7). Lastly, potentially remaining columns are processed in line 16.
    }
    \label{fig:staggered}
\end{figure}

\section{Univariate Microaggregation on real hardware}
\label{sec::realHardware}
So far we have considered the mathematical foundations underlying fast UM algorithms. In practice though, the computations involved are not happening at arbitrary precision.
On most current hardware, calculations involving reals are executed with fixed width floating point numbers (e.g. 32 or 64 bits).
We notice that utilizing the presented algorithms/cost functions can result in suboptimal clusterings, due to finite precision of these floating point numbers.
We thus dedicate the next two subsections to the analysis and mitigation of errors caused by finite precision floating point numbers.
At the end of the section we also highlight a real world runtime improvement possible for some of the cost functions.
As a running example we will consider the sum of squares cost function, though many of the considerations carry over to other cost functions as well.

The default approach of computing the $SSE$ in $O(1)$ is to first precompute (in $O(n)$) the cumulative sums 
$$s^{(1)}(j) = \sum_{i\leq j} x_i \quad \text{ and } \quad s^{(2)}(j) = \sum_{i\leq j} x_i^2$$ 
with $s^{(1)}(0)=s^{(2)}(0)=0.$
Then we can compute for $i<j$
\begin{align}
    SSE(i,j) = s^{(2)}(j)-s^{(2)}(i) - \frac{\left(s^{(1)}(j)-s^{(1)}(i) \right)^2}{j-i}
    \label{eq::sse_classic}
\end{align} which is constant time irrespective of the values of $i$ and $j$.

\subsection{Floating point errors during preprocessing}
\label{sec::numeric}
When computing the cumulative sum during preprocessing, the running sum can grow and reach quantities which are no longer represented well numerically.
This can lead to erroneous cluster cost being calculated which in turn may lead to suboptimal clusterings.
As an example, computing the sum of squares error according to \cref{eq::sse_classic} on the integers (i.e. the universe is $\Omega=\{0,1,2,\dots\}$) using 64bit floats returns the wrong result $SSE(300\,079, 300\,082)=1$ (which should be 2).
We can do better than the simple cumulative sum approach, by observing that we only need to consider $C(i,j)$ with $j-i\leq 2k-1$ for UM.
This means we don't need the full cumulative sum and can get away with sums of fewer elements.

We therefore consider partial cumulative sums which are cumulative sums which we restart every $k$-th time.
So in preprocessing we compute the partial cumulative sums 
$$s^{(1)}_{i,j} = \sum_{i< l \leq j} x_l$$ for $i=0,k,2k,\dots$ and $j(i)=i+1, i+2,\dots, i+k$, and we set $s^{(1)}_{i,i}=0$.
We can do this similarly for the cumulative sums of the squared entries.
Both of these preprocessing steps take $O(n)$ time and $O(n)$ space.
 
Using these partial cumulative sums, we can also compute the sum of squares error for univariate microaggregation.
Let us denote with $m, r = i \mod j$ the modulus $m$ and remainder $r$ of the integer division of $i$ by $j$.
Then we can compute the SSE from the $s_{i,j}$ for a univariate microaggregation problem with min size $k$:
For $i < j$ compute $m_i, r_i = i \mod k$ and $m_j, r_j = j \mod k$.
We now notice that for valid entries $i,j$ we have $m_j-m_i \leq 1$ as $j-i\leq 2k-1$.
Lets first consider $m_j=m_i$, then we compute the SSE as
$$SSE(i,j) = s^{(2)}_{m_j,r_j}-s^{(2)}_{m_j,r_i} - \left(s^{(1)}_{m_j,r_j}-s^{(1)}_{m_j,r_i} \right)^2/(j-i).$$
In the case that $m_j=m_i+1$ we compute the SSE as 
$$SSE(i,j) = s^{(2)}_{m_j,r_j}+s^{(2)}_{m_i, k}-s^{(2)}_{m_i,r_i} - \left(s^{(1)}_{m_j,r_j}+s^{(1)}_{m_i,k}-s^{(1)}_{m_j,r_i} \right)^2/(j-i).$$
Both of these are $O(1)$ but the numbers involved in the cumulative sums are much smaller which can decrease floating point representation errors.
This consideration is particularly important for the case of $SSE$ and $MAE$ both of which use cumulative sums in their computations.

\subsection{Floating point errors when computing the total cost function}
When computing the total cost function $TC$ we are computing the sum of at least $\floor{\frac{n}{2k-1}}$ values with equal sign, thus $TC$ is increasing in magnitude with each summand. 
Yet, floating point numbers have a higher absolute resolution close to zero.
It would thus be numerically advantageous if we could keep the total cost $TC$ close to zero.
For cost function with the splitting is beneficial property this can be achieved without increasing the runtime complexity by regularly resetting the stored $TC$ values.
Most algorithms work by first computing and storing the minimal total cluster cost $TC_{\text{min}}(q):=\min_i m_{iq}$ of the first $q$ points (see \Cref{sec:LWSS} for the definition of $m_{ij}$). Then the algorithms find the optimal clustering of the next $q+p$ points based on the optimal clustering of those previous values up to $q$.
We notice that $TC_{\text{min}}(l)$ increases as $l$ increases but for cost functions where splitting is beneficial we only need to know the last $2k-1$ total cost values, i.e., we need to know $TC_{\text{min}}(q)$ for $q\geq l-(2k-1)$ to compute the total cost $TC_{\text{min}}(l)$.
Thus for certain algorithms we can reduce the growth of the total cost as follows.
On a high level, we subtract the smallest stored and still needed minimal total cost from the other stored and still needed minimal total cost values at regular intervals.
As an example, consider the classical algorithms simple and simple+ with their pseudocode in \Cref{alg::classic}.
To implement the strategy, we could expand the loop spanning lines 7-9 in \Cref{alg::classic}:

\begin{algorithm}[h!]
    [continuing the loop in lines 7-9 in \Cref{alg::classic}]\\
    Let $i_{\min}$ be the lower bound of the argmin in line 8 of \Cref{alg::classic}\\
         MinNeededMinCost = $\min_{i_{\min}+1\leq i \leq j}$ MinCost[i]\\
        \For{$ i \in \{i_{\min}+1, \dots, j\}$}{
         MinCost[j] = MinCost[j] - MinNeededMinCost\\}
\end{algorithm}

Here we subtract the still needed minimal total cost MinNeededMinCost from those MinCost values that are still needed in future iterations. The values that are still needed in future iterations of the main loop are designated mostly by the lower bound $i_{\min}$. This lower bound differs depending on whether we consider the simple or simple+ algorithm.
It is possible to do a similar procedure for the staggered algorithm without changing the runtime complexity of either algorithm.
Using such a procedure, we can ensure that the total cost $TC$ doesn't grow as a function of $\approx\frac{n}{2k-1}$ but remains nearly constant (in the case of simple/simple+) or growth as a function $\propto k$ in the case of the staggered algorithm. Thus for small $k$ and large~$n$ the described procedure allows us to make use of the higher absolute floating point resolution near zero.

\subsection{Faster cost function evaluations}\label{sec:alternative}
As a final consideration for implementing UM on real hardware, we consider how to compute the cost function with fewer instructions.
All the here presented algorithms work by computing the optimal partitions of the first $q$ elements of the universe $\Omega$. 
Let us denote these first $q$ elements with $\Omega[\dots, q]$. 
From the optimal partition of the first $q$ points, the optimal partition of the first $q+1$ points of the universe can then be computed.
During this process we only compare the total cost of partitions of the same set $\Omega[\dots, q]$.
Stated differently, we compare the total cost $TC(P_{\Omega[\dots, q]})$ of one partition $P_{\Omega[\dots, q]}$ of the first points $\Omega[\dots, q]$ with the total cost of another partition $P'_{\Omega[\dots, q]}$ of the same set ${\Omega[\dots, q]}$.
If we are only interested in the resulting clustering and not in the actual cost of the clustering, we can minimize cost functions which require fewer operations to compute than the original cost functions.
This idea is more easily understood with an example.
Lets assume we want to minimize the $SSE$ on partitions of the first $q$ elements of the universe $\Omega$ then for two partitions $P_{\Omega[\dots, q]}$ and $P'_{\Omega[\dots, q]}$ of this set,
\begin{align*}
    TC_{SSE}(P_{\Omega[\dots, q]}) &\leq TC_{SSE}(P'_{\Omega[\dots, q]}) \\
    \Leftrightarrow \sum_{x\in \Omega[\dots, q]}x^2 - \sum_{X \in P_1} \frac{{\bar X}^2}{|X|} & \leq 
    \sum_{x\in \Omega[\dots, q]}x^2 - \sum_{X \in P_2} \frac{{\bar X}^2}{|X|} \\
    \Leftrightarrow \sum_{X\in P_1} \tilde C_{SSE}(X) &\leq \sum_{X\in P_2} \tilde C_{SSE}(X)
\end{align*}
with 
\begin{align}
    \tilde C_{SSE}(X) = -\frac{{\bar X}^2}{|X|}
    \label{eq::sse_alternative}
\end{align}
Overall, we find the clustering that minimizes SSE is equivalent to finding the clustering minimizing $\tilde C_{SSE}$ if we only use algorithms that compare clusterings of the same set $\Omega[\dots, q]$.
We can find similar expressions for some cost functions, as well:
\begin{align*}
    \tilde C_{\uparrow}(X) &= |X|\max(X)\\
    \tilde C_{\downarrow}(X) &= -|X|\min(X)
\end{align*}

\section{Experiments}
\label{sec::experiments}

\subsection{Runtime experiments}
We compare the algorithms by Galil and Park, Wilbers algorithm, and two versions of a simple dynamic program, one which does only use the restrictions on cluster sizes (simple) and another one which additionally uses that the row minima of the $M^T$ matrix are non decreasing (simple+).
All the methods were implemented in python and compiled with the numba compiler.
We additionally include the time to sort the initially unsorted data of the same size for comparison. 
We generate our synthetic dataset by sampling one million reals uniformly at random between zero and one.

For low values of the minimum group size $k$ the simple dynamic programs are faster than the $O(n)$ algorithms, with the simple+ algorithm outperforming the simple algorithm in terms of computation time.
These simple algorithm become slower than the more complex dynamic programs when the minimum group size $k$ exceeds $250$.
When comparing the $O(n)$ algorithms, the staggered algorithm is overall about $20$ percent faster than other algorithms.
Overall, we see that the algorithms including both the concavity of the cost function and the minimum / maximum group size constraints, i.e., the simple+ and staggered algorithm, are faster than those algorithms that don't include both constraints.

\begin{figure}[btp]
\includegraphics[width=1\columnwidth] {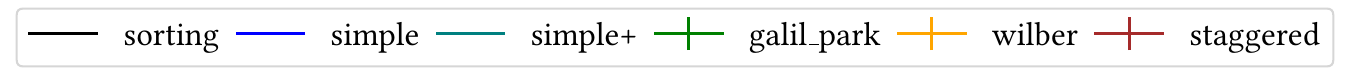}
\includegraphics[width=1\columnwidth] {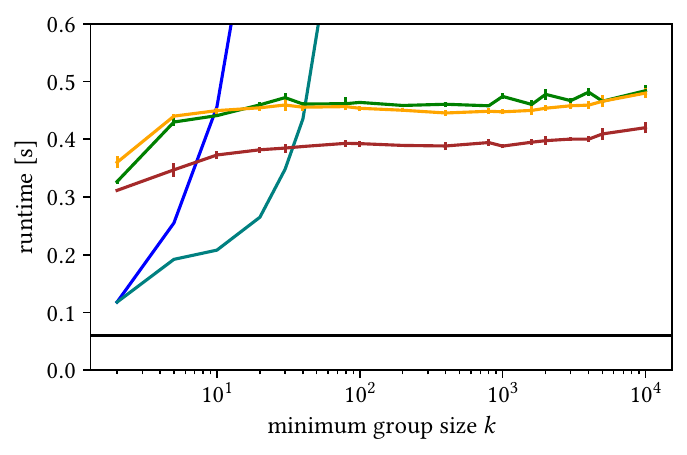}
\caption{Runtime on random data of size 1 million as a function of minimum group size $k$.
Colors indicate different algorithms as displayed in the legend above.
The time to sort the elements is included as a baseline (black).
As cost function, the sum-of-squares cost function as introduced in \cref{sec::numeric} is used.
The simple dynamic program (denoted as ``simple'', pseudocode in \Cref{alg::classic}) is representative of algorithms previously proposed (e.g.~\cite{domingo2002practical}).
It is slower than the newly proposed dynamic program (simple+, pseudocode in \Cref{alg::classic}) for any minimum group size $k$.
For small values of $k$, the $O(kn)$ dynamic programs simple and simple+ are faster than the $O(n)$ algorithms staggered, Wilber, and Galil Park. 
For large values of the minimum group size $k\geq 300$ the advanced dynamic programs are faster than even the improved dynamic program.
}
\Description{MISSING}
\label{fig::baselines}
\end{figure}

In \Cref{fig::cost_functions} we compare the different ways of computing the cost functions exemplary for the simple+ and staggered algorithm. Within the same algorithm we find that the method from \cref{sec:alternative} is fastest with the default cumulative sum (\cref{eq::sse_classic}) being a close competitor. The numerically more stable method introduced in \cref{sec::numeric} is about four times slower than the default method.

\begin{figure}[btp]
\includegraphics[width=1\columnwidth] {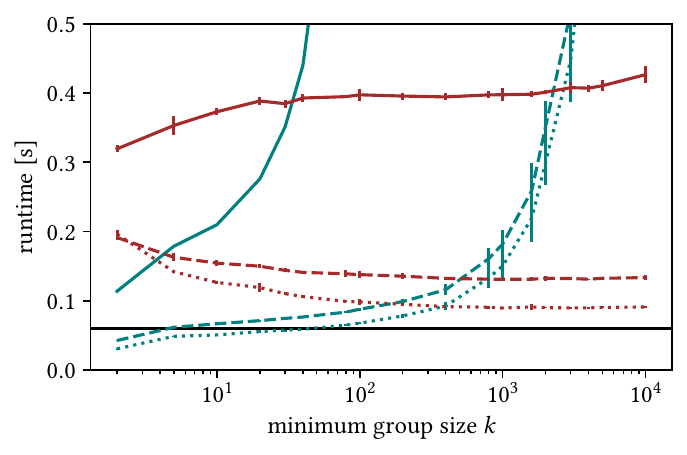}
\caption{Runtime impact of different methods to compute the cost functions.
Colors indicate algorithms, linestyles indicate methods to compute the cost functions.
The colors red/blue indicate the staggered/simple+ algorithm respectively (same colors as \Cref{fig::baselines}).
The dashed lines are the default cumulative sum approach (\cref{eq::sse_classic}), the solid lines are the partial cumulative sum method introduced in \cref{sec::numeric}, and the dotted lines indicate the alternative cost function approach introduced in \cref{sec:alternative}.
We see that overall for the same algorithm the alternative cost function is fastest with the default cumulative sum approach being not much slower. The partial cumulative sum approach comes with a big runtime penalty.
}
\Description{MISSING}
\label{fig::cost_functions}
\end{figure}

\section{Conclusions}

We considered the problem of univariate microaggregation.
We characterized properties of cost functions that lead to complexity improvements over the naive algorithm with exponential runtime.
By mapping univariate microaggregation to a least weight subsequence problem we showed that UM for ordered minimizable cost functions is solvable in $O(n^2)$.
If splitting is beneficial, a maximum group size constraint allows to improve this complexity to $O(kn)$.
For ordered minimizable, concave cost functions a different strategy allows to find an optimal solution to univariate microaggregation in $O(n)$ time and space on sorted data.
These results apply to many popular cost functions such as sum of squares, sum absolute error, maximum difference as well as round up/down costs.
By incorporating both a maximum group size constraint and the concavity of cost functions, we are able to provide an improved $O(kn)$ algorithm (simple+) and an improved $O(n)$ algorithm (staggered) which demonstrate faster empirical runtime in comparison to other algorithms in their respective complexity class.

\textbf{Limitations}
In practice one is frequently met with multivariate microaggregation (MM) problems and one of the heuristics used to solve MM problems is to devise an order on the points (see~\cite{mortazavi2020improved} and references therein).
The resulting ordered multivariate microaggregation problem is sometimes referred to as a univariate microaggregation problem even though it really is a least-weight subsequence problem with multi-dimensional cost functions.
These multi-dimensional cost functions may no longer be concave as the concavity of the cost functions relies on the sorted (i.e. arranged in increasing/decreasing order rather than any fixed but arbitrary order) 1d nature of the points.
Thus the presented $O(n)$ algorithms are not applicable to such ordered multivariate microaggregation problems.
As the simple $O(kn)$ algorithm does not rely on the concavity of cost functions, it can still be used to solve the ordered multivariate microaggregation problem if a multi dimensional equivalent of the splitting is beneficial property holds.

\textbf{Concave costs where splitting is not beneficial}
The majority of cost functions considered are concave and have the splitting is beneficial property which allows all the presented algorithms to be applied.
Yet, there are concave cost functions which do not have the splitting is beneficial property. 
As a simple example consider adding a constant group cost $\delta$ to a concave cost function $C$ which also has the splitting is beneficial property to penalize having too many clusters (i.e., we obtain the adjusted cost function $\hat C(i,j) := C(i,j)+\delta$). This adjusted cost function is still concave but no longer has the splitting is beneficial property. 
Thus the generic $O(n)$ algorithms \cite{wilber1988concave, galil1989linear} may still be used to solve UM problems for the modified cost function $\hat C$ while the $O(kn)$ algorithms (simple, simple+) and the staggered algorithm are not applicable.

\textbf{Relation to k-means}
The definitions presented here also allow to have a more fine grained understanding of other 1D clustering problem with different cost functions. 
As an example let us consider the $k$-means problem which restricts the number of clusters to be exactly $k$.
An existing $O(kn^2)$ algorithm~\cite{gronlund2017fast} solves the 1D k-means problem if the cost function considered is ordered minimizable.
If the cost is additionally also concave, $O(kn)$ and $O(n\log U)$\footnote{$\log U$ is a number of bits used} algorithms exist~\cite{gronlund2017fast}.
The splitting is beneficial property, while not explicitly used in algorithms, seems to be implicitly used when formulating the k-means problem.
The most popular formulation of the $k$-means problem asks to minimize the clustering cost subject to \emph{exactly} k clusters, while minimizing costs subject to \emph{at most} k clusters could also be thinkable.
However, these two formulations will result in identical optimal clusterings if the cost function has the splitting is beneficial property.

\textbf{Mixed UM and k-means scenario}
Another way to avoid using overly many clusters in a UM setting could be to impose a maximum number of clusters constraint in addition to the usual minimal group size constraint. This would correspond to a mixed UM and k-means scenario.
We implicitly showed that the considered cost functions are ordered minimizable even in a mixed UM and k-means scenario, as cost functions that are $q$-partition ordered minimizable also encompass this mixed scenario.
Hence, algorithms used to solve the 1D $k$-means problem (see~\cite{gronlund2017fast} for many such algorithms) work correctly in the mixed UM and $k$-means scenario if we adapt the cost function as explained in \cref{eq::cases1}.

\textbf{Regularized UM problem}
The usual UM problem forbids any clustering with too few entries, which might be overly restrictive when $k$ is larger.
In such cases it might be worth considering a regularized UM problem, i.e., we don't entirely forbid invalid clusterings but instead return an additional cost when the minimal cluster size constraint is violated. 
This can for example be achieved by defining the regularized cluster cost as $$C_{\text{reg}}(i,j) = C(i,j) + \begin{cases}
    \lambda \left(k-(j-i)\right) & j - i < k\\
    0 & \text{else}\\
    \end{cases}$$
    with regularization parameter $\lambda \geq 0$.
By adjusting the regularization parameter $\lambda$ to be large we can make it more and more unlikely that the cluster size constraint is violated.
If $C$ was concave, the regularized UM problem may also be minimized by the generic concave algorithms~\cite{wilber1988concave, galil1989linear}.
Similarly, if splitting was beneficial for $C$, the regularized UM may be solved by a slightly adjusted\footnote{In the regularized UM scenario, we need to also consider clusterings in which some clusters contain less than k points. This corresponds to the red entries in \Cref{fig:staggered} of which there are only k-1 per column. We thus process 2k-1 entries per column resulting in overall O(kn) runtime.} standard algorithm in $O(kn)$.

\newpage
\bibliographystyle{ACM-Reference-Format}
\bibliography{references}
\clearpage
\newpage
\newpage

\newpage
\clearpage
\section{Appendix}

\subsection{Pseudo code for the classic algorithms (simple/ simple+) }
\vspace{-2mm}
\begin{algorithm}[h!]
    \KwIn{sorted array $v$, minimum size $k$, cost function calculator $C$}
    \KwOut{array implictly representing the optimal univariate microaggregation of v}
    let $n=\text{length}(v)$\\
    \If{$n\leq 2k-1$}{
        \Return all zeros array of length $n$
    }
    C.do\_preprocessing(v, k)\\
    MinCost/Argmin = zero based arrays of size n+1\\
    MinCost[0] = 0\\
    \For{$j \in \{1,\dots, n\}$}{
        Argmin[j] = $\argmin_{0\leq i<j}$ MinCost[i] + C.calc(i, j)\\
        MinCost[j] = MinCost[Argmin[j]] + C.calc(Argmin[j], j)\\
    }
    \Return Argmin[1:n]
    
\caption{Pseudo code for the \textbf{classical algorithm} used to solve least weight subsequence problems.
If this algorithm is used to solve univariate microaggregation with ordered minimizable cost functions, the constraint on the $\argmin$ in line 8 becomes $0\leq i<j-k$.
If splitting is beneficial the constraint on the $\argmin$ in line 8 is $\max(0, j-2k+2)  \leq  i  \leq  j-k$  (this is the simple algorithm).
Lastly, for concave cost functions the bound on the $\argmin$ is $\max(\text{Argmin[j-1], j-2k+2})\leq i\leq j-k$ with Argmin[0]=0 (this is the simple+ algorithm).
}
\label{alg::classic}
\end{algorithm}

\subsection{Pseudo code for the backtrack algorithm}
\vspace{-2mm}
\begin{algorithm}[h!]
    \KwIn{zero indexing based implicit cluster array $b$}
    \KwOut{array explictly representing the clustering}
    out = zero indexing based array of same length as $b$\\
    p=length($b$)\\
    NumClusters=0\\
    \While{true}{
        NumClusters +=1\\
        \For{$i \in \{b[p-1], \dots, p-1\}$}{
            out[i] = NumClusters
        }
        p=b[p-1]\\
        \If{$p=0$}{
            break
        }
    }
    \For{$i \in \{0, \dots, \text{length}(b)-1\}$}{
            out[i] = NumClusters - out[i]
        }
    \Return out
\caption{The \textbf{backtrack} algorithm converts an implicit cluster representation array $b$ into an explicit cluster representation.
All the presented algorithms return such an implicit representation.
For example an implicit cluster array $[0,0,1,1,2,2]$ represents the clustering $[0,1,2,2,2,2]$ that is the first and second element are their own cluster, and the remaining elements are in one cluster.
The backtrack algorithm runs in $O(n)$.
}
\label{alg::backtracking}

\end{algorithm}

\subsection{Multivariate microaggregation is NP-hard for many cost functions}
\label{sec::np-hardness}
It was previously known that multivariate microaggregation is NP-hard for the sum of squares error cost function.
We noticed that the proof provided in \cite{oganian2001complexity} could be easily adapted for other cost functions as well.

Let $l_i(C), i\in\{3,4,5\}$ denote the minimal cluster cost for cost function $C$ computed for the structures of size $i$ outlined in \cite{oganian2001complexity}.
For a (multi)set $X = \lmulti x_1,\dots, x_p \rmulti\text{ with }x_i\in \mathbb{R}^d$ let us denote with $X_{(i)}=\lmulti x_1 \cdot e_i, \dots, x_p\ \cdot e_i \rmulti$ the multi-set of values projected onto the $i$-th standard basis vector.
\begin{theorem}[Generalizing the result in \cite{oganian2001complexity}]
Microaggregation for cost functions $ C^{(d)}(X) := \sum_{i=1}^d C(X_{(i)})$ is NP-hard in dimensions $d \geq 2$ and $k=3$ if splitting is beneficial for cost function $C$, and the ratios of minimal structure cost are $\frac{l_4(C)}{l_3(C)}>\frac{4}{3}$, and $\frac{l_5(C)}{l_3(C)}>\frac{5}{3}$.
\label{thm::nphardness}
\end{theorem}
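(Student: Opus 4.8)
The plan is to follow the reduction structure of Oganian and Domingo-Ferrer~\cite{oganian2001complexity}, which reduces a known NP-hard problem (a restricted variant of a partition/matching problem) to microaggregation for the SSE cost, and to show that the reduction goes through verbatim as long as the cost function $C$ satisfies the three stated structural hypotheses. The key observation is that the original proof never uses any special analytic feature of SSE beyond three facts: (i) the cost decomposes additively across dimensions, which is exactly guaranteed by our definition $C^{(d)}(X) = \sum_{i=1}^d C(X_{(i)})$; (ii) splitting a large cluster into smaller pieces never increases the total cost, so that in an optimal solution we may assume every cluster has size in $\{3,4,5\}$ when $k=3$ (this is precisely the splitting-is-beneficial property, \cref{eq::splitting}, combined with the $k=3$ minimum size constraint, which forces clusters to have size between $3$ and $2k-1=5$); and (iii) certain \emph{ratio} inequalities between the optimal costs $l_3(C), l_4(C), l_5(C)$ of the three admissible cluster sizes on the gadget structures. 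I would make explicit that these are the only properties invoked.

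First I would recall the gadget construction from~\cite{oganian2001complexity}: the instance is built in $d=2$ dimensions (the result for $d\ge 2$ then follows by padding extra coordinates with constant values that contribute equal cost regardless of clustering), and the points are arranged so that the only candidate clusters correspond to the ``size-3'', ``size-4'', and ``size-5'' structures whose minimal costs are denoted $l_3(C), l_4(C), l_5(C)$. Using splitting-is-beneficial together with $k=3$, I would argue that in any optimal clustering each cluster is one of these three structures, so the total optimal cost is an integer combination $a\cdot l_3 + b\cdot l_4 + c\cdot l_5$ subject to the counting constraint that the $a$ size-3, $b$ size-4, and $c$ size-5 clusters cover all points.

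Next I would show that the ratio conditions $l_4/l_3 > 4/3$ and $l_5/l_3 > 5/3$ force the optimal solution to prefer size-3 clusters in exactly the way the original SSE proof requires. The point is that splitting a size-$4$ (resp. size-$5$) cluster into size-$3$ pieces is advantageous on a \emph{per-point} basis precisely when these normalized ratios exceed $4/3$ (resp. $5/3$); for SSE these ratios are computed to be strictly larger, and our hypotheses simply abstract that numerical fact. With these inequalities, the combinatorial structure of which configurations minimize the total cost is identical to the SSE case, so the YES/NO instances of the underlying NP-hard problem map to low/high-cost microaggregations exactly as before, completing the reduction.

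The main obstacle I anticipate is not the high-level logic but verifying that the ratio conditions genuinely suffice to reproduce every inequality the original proof chains together — in~\cite{oganian2001complexity} the specific \emph{values} of the SSE structure costs are used, and I must confirm that replacing them by arbitrary $l_3, l_4, l_5$ obeying only the two stated strict inequalities still yields a strict cost gap between the two cases (rather than merely a weak one), so that the decision threshold is well-defined. I would handle this by isolating each place the numbers $4/3$ and $5/3$ arise in the original argument and checking that a strict inequality there propagates; any additional comparison that the SSE values happened to satisfy ``for free'' would need to be either derived from splitting-is-beneficial and the \emph{wide-is-worse} monotonicity, or added as a hypothesis. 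I expect that the two stated ratio conditions, plus splitting-is-beneficial, are exactly the minimal set needed, and that verifying this minimality is the delicate part of the proof.
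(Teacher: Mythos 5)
Your proposal follows essentially the same route as the paper, which itself offers no detailed argument beyond asserting that the reduction of Oganian and Domingo-Ferrer adapts once one isolates the additive decomposition across dimensions, the splitting-is-beneficial property (restricting clusters to sizes $3$--$5$ for $k=3$), and the two ratio conditions on $l_3, l_4, l_5$ --- exactly the three ingredients you identify. Your closing caveat about verifying every place the SSE-specific structure costs enter the original chain of inequalities is the genuinely delicate point, and the paper does not discharge it either, so your plan is at least as explicit as what is written there.
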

Computing the $l_i, i \in \{3,4,5\}$ values for the cost functions in \cref{lem:costs_om}, we get the following:

\begin{corollary}
Microaggregation is NP-hard in dimensions $\geq 2$ and $k = 3$ for the cost functions SSE, SAE, and Round up/Round down cost.
\end{corollary}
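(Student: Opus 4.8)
The plan is to verify, for each of the four named cost functions, the three hypotheses of \Cref{thm::nphardness}: that splitting is beneficial, and that the two ratio inequalities $l_4(C)/l_3(C) > 4/3$ and $l_5(C)/l_3(C) > 5/3$ hold. The splitting-is-beneficial property is already available: the corollary following \cref{eq::cases2} established it for all cost functions in \Cref{lem:costs_om}, including SSE, SAE, and the round up/down costs, since these are concave (\Cref{lem:costs_concave}) with $C \geq 0$ and $C(i,i+1)=0$. Thus the only remaining work is the explicit computation of the minimal structure costs $l_3$, $l_4$, $l_5$ for each cost function.

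Recall that $l_i(C)$ is the minimal cost, under cost function $C$, of optimally clustering the explicit size-$i$ point gadgets used in the reduction of \cite{oganian2001complexity}, and that because $C^{(d)}(X) = \sum_i C(X_{(i)})$ decomposes over coordinates, each $l_i$ is obtained by summing one-dimensional cluster costs over the coordinates of the (fixed, finite) gadget. The two ratio conditions have a clean interpretation: $l_4(C)/l_3(C) > 4/3$ is equivalent to $l_4(C)/4 > l_3(C)/3$, and likewise $l_5(C)/l_3(C) > 5/3$ is equivalent to $l_5(C)/5 > l_3(C)/3$; that is, the per-point cost of the size-3 gadget is strictly cheaper than that of the size-4 and size-5 gadgets. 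These are precisely the inequalities that force an optimal microaggregation (with $k=3$, hence clusters of size $3$, $4$, or $5$ once splitting is used) to prefer size-3 clusters, which is what makes the reduction go through.

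I would then carry out these computations one cost function at a time. For SSE the representative is the mean and each one-dimensional contribution is $\sum_{x\in X}(x-\bar X)^2$, so the $l_i$ are the sum-of-squares values of the gadget configurations, essentially those already handled in \cite{oganian2001complexity}. For SAE the representative switches to the median, for $C_\uparrow$ to the maximum, and for $C_\downarrow$ to the minimum of each cluster; in each case I would evaluate the gadget cost at the corresponding optimal representative and minimize over the admissible internal clusterings of the gadget. With $l_3$, $l_4$, $l_5$ in hand for each function, I check the two strict inequalities and invoke \Cref{thm::nphardness}.

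The main obstacle I anticipate lies with the non-smooth cost functions: for SAE, $C_\uparrow$, and $C_\downarrow$ the minimizing representative is not the centroid, so the minimal structure cost must be evaluated at the median or at the extreme value and then minimized over the partitions of the gadget into admissible subclusters. Getting these minima exactly right, and confirming that the two inequalities are \emph{strict} rather than merely non-strict, is the delicate step; the SSE case, by contrast, reduces directly to the computation already present in \cite{oganian2001complexity}.
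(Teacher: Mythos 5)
Your proposal matches the paper's own argument: invoke \Cref{thm::nphardness}, note that splitting is beneficial for these cost functions via their concavity together with $C\geq 0$ and $C(i,i+1)=0$, and then compute the minimal structure costs $l_3,l_4,l_5$ from the gadgets of \cite{oganian2001complexity} to check the two ratio inequalities. The paper likewise leaves the explicit $l_i$ computations to the reader, so your plan is, if anything, slightly more detailed than the published justification.
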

The original result \cite{oganian2001complexity} dealt with SSE only but the proof allows easy adaption for SAE, and round up/round down cost.
Unfortunately, \cref{thm::nphardness} is not applicable to the maximum distance cost as both $l_i$ ratios are too small.

\subsection{Total monotonicity and concave cost functions}
\subsubsection{Monge matrices and their transpose are totally monotone}
Before we start with the proof for concavity we recapt the relationship of concave cost functions and totally monotone matrices.
For a concave cost function we start by rearranging the quandrangle inequality \cref{eq:monge} as 
\begin{align*}
C(a, c) - C(a, d) \leq C(b, c) - C(b, d)\; \text{for}\; 0\leq a<b<c<d\leq n 
\end{align*}
We then notice that now $C(b, c) - C(b, d) < 0 \Rightarrow C(a, c) - C(a, d) < 0$.
If we now consider the matrix $\matr{C}$ with entries $\matr{C}_{ij}=C(i,j)$, all 2 by 2 submatrix of $\matr{C}$ have the form
$$\begin{pmatrix}
C(a,c) & C(a,d)\\
C(b,c) & C(b,d)
\end{pmatrix}$$
for suitable choice of of the integers $a<b, c<d$.
To check for monotonicity we see that $C(b,c) < C(b,d) \Leftrightarrow C(b,c) -C(b,d) < 0$ which by our previous equality implies $C(a, c) - C(a, d) < 0 \Leftrightarrow C(a, c) < C(a, d)$.
That means that any 2 by 2 submatrix is totally monotone. Thus the entire matrix is totally monotone.

For the transpose $\matr{C}^T$ we find that 2 by 2 submatrices are of the form
$$\begin{pmatrix}
C(a,c) & C(b,c)\\
C(a,d) & C(b,d)
\end{pmatrix}$$\\
To check for total monotonicity we need to show that $C(a,d) < C(b,d) \Leftrightarrow C(a,d) - C(b,d) < 0$ implies $C(a,c) < C(b,c) \Leftrightarrow C(a,c) - C(b,c) < 0$.
By rearranging the quadrangle inequality we find
\begin{align*}
C(a, c)  - C(b, c)  \leq C(a, d) - C(b, d)\; \text{for}\; 0\leq a<b<c<d\leq n 
\end{align*}
Now $C(a, d) - C(b, d)\leq 0 \Rightarrow C(a, c)  - C(b, c)$ which is exactly what we need such that any 2 by 2 submatrix of $\matr{C}^T$ is totally monotone.
Thus the entire matrix $\matr{C}^T$ is totally monotone.

\subsubsection{Proof that $\matr{\tilde M}^T$ is totally monotone}
\label{sec::total_mon_M}
We proof total monotonicity of $\matr{\tilde M}^T$ by showing that the all 2 by 2 submatrices fullfill the necessary equations. We never explicitly consider a submatrix but we choose two entries in the same row but different columns (col1 < col2). We then compare the values of  $\matr{\tilde M}^T$ at those locations (val1$:=\matr{\tilde M}^T_{\text{row},\text{col1}}$, val2$:=\matr{\tilde M}^T_{\text{row},\text{col1}}$). Depending on the size of the values, this imposes requirements on either the values below (in case of val1 > val2) or the entries above (in case of val1 $\leq$ val2).
If row' is such a row either above or below, we refer to the entries in the same columns as before as val1'$:=\matr{\tilde M}^T_{\text{row'},\text{col1}}$, val2'$:=\matr{\tilde M}^T_{\text{row'},\text{col1}}$.

\textbf{$\matr{\tilde M}^T$ of the adapted cost in \cref{eq::cases3}}
Instead of referring to explicit values, we refer to the values by their colors. If the cluster corresponding to the entry is too small, the color is red else the color is green (these are the two cases in \cref{eq::cases3}).
\begin{itemize}
    \item \emph{Case val1=green, val2 = green:}
    val1 $\leq$ val2:
        The pairs above are either green-green, green-red, or red-red. For green-green pairs monotonicity is guaranteed by \cref{eq:monge}. This implies total monotonicity as the transpose of monge matrices are also monge\cite{burkard1996perspectives}, monge matrices are totally monotone\cite{burkard1996perspectives} and if you add constant values to the columns of totally monotone matrices they remain totally monotone\cite{burkard1996perspectives}. For green-red, and red-red pairs by design val1' < val2'. Overall values above col1 are less than or equal to those in col2.
    val1 > val2:
        The only case if green-green which is correct by monotonitcity.
    \item \emph{Case val1=green, val2=red:}
        The only case is val1 < val2, the pairs above are either green-red or red-red in either cases the values above col1 are smaller than those above col2.
    \item \emph{Case val1=red, val2=red:}
        The only case here is val1 < val2: all entries above are red-red as well for which val1' < val2'
\end{itemize}

\textbf{$\matr{\tilde M}^T$ of the adapted cost in \cref{eq::cases4}}
As before, instead of referring to explicit entries, we refer to the entries by their colors as visible in \Cref{fig:staggered}, i.e. the first case in \cref{eq::cases4} is red, the second orange and the third is green.
\begin{itemize}
\item \emph{Case val1=green, val2 = green:}
    val1 $\leq$ val2: The cases above are the same as in the previous consideration.\\
    val1 > val2: The pairs below are either green-green, orange-green or orange-orange. For green-green pairs monotonicity is guaranteed as in the val1 $\leq$ val2 case. For orange-green and orange-orange pairs, val1' > val2' by design.
\item \emph{Case val1=red, val2=red:}
    the only case here is val1 < val2: all entries above are red-red as well for which val1' < val2'
\item \emph{Case val1=orange, val2=green:}
    The only case is val1 > val2, the possible pairings below are orange-orange or orange-green. We see that the values below col1 are larger than those below col2.
\item \emph{Case val1=orange, val2=red:}
    The only case is val1 < val2, the pairs above are either orange-red or green-red. In both cases the values above column 1 are smaller than those above column 2.
\item \emph{Case val1=green, val2=red:}
    The only case is val1 < val2, the pairs above are either green-red or red-red in either cases the values above col1 are smaller than those above col2.
\item \emph{Case val1=orange, val2=orange:}
    The only case is val1 > val2, the pairs below are orange-orange. Thus the values below column 1 are larger than those in column 2.
\end{itemize}

\subsection{Computational complexity of computing cost functions}
\label{sec:O1}
This section shows, that all the cost functions presented in \cref{lem:costs_om} can be computed in $O(1)$ performing at most $O(n)$ time  for preprocessing and using at most $O(n)$ space.
Throughout this section we assume that the input data is sorted, i.e. $i<j\Rightarrow x_i\leq x_j$. We also assume that $i< j$ such that $X_{i,j}=\lmulti x_{i+1}, \dots, x_j \rmulti$ is well defined.

\subsubsection{SAE}
During preprocessing we compute and store the cummulative sum $s(i) = \sum_{l=1}^i x_l$.
The computation of that cummulative sum is $O(n)$.

For a (multi-)set $A$ we denote with $(A)_1/(A)_2$ the $\floor*{ \frac{|A|}{2} }$ smallest/largest elements of $A$.
Let $l=\floor*{\frac{j-i}{2}}$ then we note, that
\begin{align*}
    SAE(i, j) &= \sum_{x\in (X_{i,j})_2}x -\sum_{x\in (X_{i,j})_1}x \\
              &= s(j)-s(j-l) - \left(s(i+l)-s(i)\right)
\end{align*}
which is $O(1)$ if the $s(i)$ are stored in an array.

\subsubsection{Maximum distance cost}
For the maximum distance cost $C_\infty$ we have
\begin{align*}
    C_\infty(i, j) &= \max_{x\in X_{i,j}} |x - \frac{x_{i+1}+x_j}{2}| = \frac{x_j-x_{i+1}}{2}
\end{align*}
which is $O(1)$ if the $x_i$ are stored in an array.

\subsubsection{Round down/ round up cost}
Similar as for SAE for precomputation we compute the cummulative sum $s(i) = \sum_{l=1}^i x_l$, $s(0)=0$ in $O(n)$ and store all of the $s(i)$ values.
Then the round down cost is
\begin{align*}
    C_\downarrow(i, j) &= \sum_{l=i+1}^j x_l - x_{i+1} = s(j) - s(i) - (j-i) x_{i+1}.
\end{align*}
The round up cost is
\begin{align*}
    C_\uparrow(i, j) &= \sum_{l=i+1}^j x_j - x_l = (j-i) x_j -\left(s(j) - s(i)\right).
\end{align*}
both of which are $O(1)$ if the $x_i$ and $s(i)$ are stored in an array.

\subsection{Concavity of cost functions}
\label{sec:costs_monge}
\subsubsection{Maximum distance cost is concave}
In 1D the maximum distance cost is defined as $C_\infty(a,b) = \max_{l=a+1}^b |x_l - \overset{\infty}{x}_{a..b}|$ where $\overset{\infty}{x}_{a..b} = \frac{x_b+x_{a+1}}{2}$ is chosen such that it minimizes the max expression.
Thus in 1D, the maximum distance cost can be written as $C_\infty(a,b) = |x_b - \overset{\infty}{x}_{a..b}| = |x_{a+1} - \overset{\infty}{x}_{a..b}| = \frac{x_b-x_{a+1}}{2}$.
We can now show, that the maximum distance cost is concave:
$C_\infty(a, c) + C_\infty(b, d) = \frac{x_c-x_{a+1}}{2}+\frac{x_d-x_{b+1}}{2} = \frac{x_d-x_{a+1}}{2}+\frac{x_c-x_{b+1}}{2}  = C_\infty(a, d) + C_\infty(b, c)$.

\subsubsection{Round up/round down costs are concave}
In 1D the round up cost is $C_\uparrow(a,b) = \sum_{l=a+1}^b |x_b-x_l| = (b-a) x_b -\sum_{l=a+1}^b x_l$.
We can now show, that the round up cost is concave:
\begin{align*}
    C_\uparrow(a, c) + C_\uparrow(b, d) &\leq C_\uparrow(a, d) + C_\uparrow(b, c) \\
    \begin{split}
    \Leftrightarrow (c-a)x_c + (d-b)x_d - &\sum_{i=a+1}^c x_i - \sum_{i=b+1}^d x_i  \\
    \leq(d-a)x_d + &(c-b)x_c - \sum_{i=a+1}^d x_i - \sum_{i=b+1}^c x_i 
    \end{split} \\
    \Leftrightarrow (c-a)x_c + (d-b)x_d  &\leq (d-a)x_d + (c-b)x_c \\
    \Leftrightarrow - a x_c -b x_d  &\leq - a x_d  -b x_c \\
    \Leftrightarrow (b - a) x_c &\leq (b - a) x_d\\
    \Leftrightarrow  x_c &\leq x_d\\
\end{align*}
which is true as the points are ordered in increasing order.

We can do a similar argument for the round down cost $C_\downarrow(a,b) = \sum_{i=a+1}^b |x_i-x_a| = \sum_{i=a+1}^b x_i - (b-a) x_a$.
\begin{align*}
    C_\downarrow(a, c) + C_\downarrow(b, d) &\leq C_\downarrow(a, d) + C_\downarrow(b, c) \\
    \begin{split}
    \Leftrightarrow \sum_{i=a+1}^c x_i + \sum_{i=b+1}^d x_i - (c-a)&x_{a+1} - (d-b)x_{b+1}  \\
    \leq\sum_{i=a+1}^d x_i - &\sum_{i=b+1}^c x_i - (d-a)x_{a+1} - (c-b)x_{b+1} 
    \end{split} \\
    \Leftrightarrow -(c-a)x_{a+1} - (d-b)x_{b+1}  &\leq - (d-a)x_{a+1} - (c-b)x_{b+1} \\
    \Leftrightarrow  - c x_{a+1} + -d x_{b+1}  &\leq - d x_{a+1}  - c x_{b+1} \\
    \Leftrightarrow (d - c) x_{a+1} &\leq (d - c) x_{b+1}\\
    \Leftrightarrow  x_{a+1} &\leq x_{b+1}\\
\end{align*}
which is true as the points are sorted in increasing order.

\subsubsection{SAE cost is concave}
For a (multi-)set $A$ let us denote with $(A)_{1/2}$ the $\floor*{ \frac{|A|}{2} }$ smallest/largest elements. For brevity let us define $s(A):=\sum_{x \in A} x$.
Let $A, B, C$ be the three sets obtained by the cut indicies $a,b,c,d$ ($A=\lmulti x_{a+1}, \dots,  x_b \rmulti$, $B=\lmulti x_{b+1}, \dots,  x_c \rmulti$, $C=\lmulti x_{c+1}, \dots,  x_d \rmulti$).
Then quadrangle inequality for SAE reads:
\begin{align*}
    C_\downarrow(a, c) &+ C_\downarrow(b, d) \leq C_\downarrow(a, d) + C_\downarrow(b, c) \\
    \begin{split}
        \Leftrightarrow -s((&A \cup B)_1) + s((A \cup B)_2) - s((B \cup C)_1) + s((B \cup C)_2) \\
        \leq -&s((A \cup B \cup C)_1) + s((A\cup B \cup C)_2) - s((B)_1) + s((B)_1)
    \end{split} \\
\end{align*}
To simplify the above expression we observe the following relations:
\begin{align*}
    (A \cup B)_1 &\subseteq (A \cup B \cup C)_1 & R_{ABC-AB,1}&:= (A \cup B \cup C)_1\setminus (A \cup B)_1 \\
    (B \cup C)_2 &\subseteq (A \cup B \cup C)_2 & R_{ABC-BC,2}&:= (A \cup B \cup C)_2\setminus (B \cup C)_2  \\
    (B)_1 &\subseteq (B \cup C)_1 & R_{BC-B,1}&:= (B \cup C)_1\setminus (B)_1 \\
    (B)_2 &\subseteq (A \cup B)_2 & R_{AB-B,2}&:= (A \cup B)_2\setminus (B)_2 \\
\end{align*}
Then we have 
\begin{align*}
    C_\downarrow(a, c) + C_\downarrow(b, d) &\leq C_\downarrow(a, d) + C_\downarrow(b, c) \\
    \Leftrightarrow  s(R_{ABC-AB,1}) + s(R_{AB-B,2}) &\leq s(R_{ABC-BC,2}) + s(R_{BC-B,1})\\
\end{align*}
which is true as $s(R_{ABC-AB,1}) \leq s(R_{BC-B,1})$ and $s(R_{AB-B,2}) \leq s(R_{ABC-BC,2})$.
One can see that the first expression is true by increasing the size of $A$ gradually. If $|A|= 0$ then certainly the lhs and rhs are the same. If we now add elements to $A$ which are no larger than those in $B$ then the lhs expression will only get smaller. One can employ a similar trick for the second expression by adding elements to $C$ which no smaller than those in $B$, which only increase the rhs expression.

\subsubsection{SSE cost is concave}
We already explained how to compute the SSE cost in $O(1)$ usind $O(n)$ for preprocessing in \cref{sec::realHardware}.

\subsection{Ordered minimizable}
\label{sec::ordered}

\subsubsection{Proof of \Cref{thm::ordered_minimizable}}
\label{sec::proof_ordered_minimizable}
Before we start with the actual proof lets provide some clarifications regarding multi-sets.
When calling for the $l < |X|$ smallest elements of a multi-set $X$, we mean the multi-set $S\subseteq X$ of size $l$ which fullfills $\forall_{s \in S}\forall_{x \in (X\setminus S)}\, s \leq x$.

We conduct the proof of \cref{thm::ordered_minimizable} in two steps. First we show that q-partition ordered minimizable for all q implies ordered minimizable. We then show that if a cost is 2 partition ordered minimizable it is 2-partition ordered minimizable for all $q$.
\begin{lemma}
    If a cost function is q-partition ordered minimizable for all $q$ then it is ordered minimizable.
    \label{lem:ordered}
\end{lemma}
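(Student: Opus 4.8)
The plan is to reduce the claim to a partition-by-partition domination argument and then invoke the hypothesis directly. By \Cref{def::minimizable}, showing that $TC$ is ordered minimizable amounts to showing that the minimum of $TC$ over all relevant partitions equals its minimum over all relevant partitions that are in addition pairwise ordered. Since the pairwise ordered relevant partitions form a subset of all relevant partitions, one inequality is immediate; the content lies in proving that for every relevant partition $P$ there exists a pairwise ordered relevant partition $P'$ with $TC(P') \le TC(P)$.

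To this end I would fix an arbitrary relevant partition $P = \lmulti X_1, \dots, X_q \rmulti$ and let $q$ denote its number of parts. Because $TC$ is $q$-partition ordered minimizable by hypothesis, the definition supplies a pairwise ordered partition $P'$ whose multiset of cardinalities matches that of $P$, i.e.\ $S_{P'} = S_P$, and which satisfies $TC(P') \le TC(P)$. The only remaining task is to confirm that $P'$ is itself relevant. The key observation is that, in every scenario under consideration, relevance is a property of the cardinality multiset $S_P$ alone: for UM a partition is relevant exactly when all parts have size at least $k$, and for the $k$-means setting exactly when there are $k$ parts. Since $S_{P'} = S_P$, the partition $P'$ inherits relevance from $P$.

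Assembling these pieces yields the claim: every relevant $P$ is dominated in cost by a pairwise ordered relevant $P'$, so restricting the minimization to pairwise ordered partitions cannot raise the optimal value, and ordered minimizability follows. I expect the only real subtlety --- rather than a genuine obstacle --- to be the bookkeeping around relevance, namely making explicit that the same-cardinality guarantee of $q$-partition ordered minimizability is precisely what transfers the relevance constraint from $P$ to $P'$. Once relevance is recognized as a function of $S_P$, the implication is immediate and no case analysis of the specific cost function is required.
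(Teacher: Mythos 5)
Your proposal is correct and takes essentially the same approach as the paper: both arguments hinge on the observation that the same-cardinality guarantee $S_{P'}=S_P$ transfers the relevance constraint (UM minimum group size or $k$-means part count) from $P$ to $P'$, so the pairwise ordered partition supplied by $q$-partition ordered minimizability remains admissible. The paper frames this as a contradiction starting from a minimizing partition, whereas you give a direct domination argument over all relevant partitions, but this is a cosmetic difference.
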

We proof \cref{lem:ordered} by contradiction.
Lets assume that it is not enough to consider only pairwise ordered partitions when finding any partition minimizing a total cost which is q-partition ordered minimizabe for all q.
But then there is a partition $P$ (wlog. $|P|=q$) which minimizes the total cost $TC$ subject to the either the UM or k-means constraint.
But then as the total cost $TC$ is q-partition ordered minimizable we are guaranteed a pairwise ordered partition $P'$ with has the same multiset of sizes as $P$ but no larger total cost.
Thus $P'$ also respects the UM or k-means constraint. Thus it would be enough to have considered only pairwise ordered partitions when minimizing the total cost on the relevant partitions, which concludes our contradiction.

\begin{proof}
Now we show that if a cost is 2-partition ordered minimizable, then it is q-partition ordered minimizable for all $q$ by induction on $q$.

The base case is the assumption.
Assume the TC is q-ordered minimizable. Let $P = \lmulti X_0, \dots, X_{q+1}\rmulti$ be a (q+1)-partition.
We can construct a pairwise ordered partition $P'$ which has no worse TC than $P$:
Let $A_1=X_1$. From 2-partition ordered minimizable we know that we can obtain sets $B_{i+1}, C_{i+1} = \text{pairwise\_order}(A_i, X_{i+1})$ (see \cref{lem:costs_om}).
We then define $A_{i+1}:= \min(B_{i+1}, C_{i+1})$ and ${\tilde X}_{i}:= \max(B_{i+1}, C_{i+1})$.
Note that $A_{i+1}$ contains the smallest elements in $\bigcup_{j=1}^{i+1} X_j$.
We can then define the partition $$\tilde P_{i} := \lmulti \tilde X_1, \dots, \tilde X_{i-1}, A_i, X_{i+1}, \dots, X_{q+1} \rmulti.$$
We see that $\tilde P_{1} = P$ and $TC(\tilde P_{i+1})\leq TC(\tilde P_i)$. The latter is consequence from
\begin{align*}
    TC(\tilde P_{i+1}) &= \sum_{j=1}^{i} C(\tilde X_j) + C(A_{i+1}) + \sum_{j=i+2}^{q+1} C(X_j)    \\
    \leq    TC(\tilde P_{i}) &= \sum_{j=1}^{i-1} C(\tilde X_j) + C(A_i) + \sum_{j=i+1}^{q+1} C(X_j)
\end{align*}

which is equivalent to $$C(\tilde X_i) + C(A_{i+1})   \leq   C(A_i) + C(X_{i+1}).$$
Here we note, that either $|A_{i+1}|=|X_{i+1}|$ and $|\tilde X_{i}|=|A_{i}|$ or $|A_{i+1}|=|A_{i}|$ and $|\tilde X_{i}|=|X_{i+1}|$.
In either case the equations can be written as 
\begin{align}
    TC(\{L, R\}) \leq TC(\{Y, Z\})
\end{align} with $|L|=|Y|$, $|R|=|Z|$ and $L, R$ being pairwise ordered but this is exactly being 2-partition ordered minimizable which we have by assumption.
Overall we have now found a partition $\tilde P_{q+1} = \lmulti\tilde X_1, \dots, \tilde X_{q}, A_{q+1}\rmulti$ with TC no worse than the TC of $P$.
Now we consider the minimization problem of TC on $\Omega \setminus A_{q+1}$ subject to partitions of size $q$.
By the induction hypothesis we are guaranteed a pairwise ordered partition $\lmulti\hat X_1, \dots, \hat X_q\rmulti$ that minimizes TC on $\Omega \setminus A_{q+1}$.
Now we have found a (q+1) ordered partition $P' = \lmulti\hat X_1, \dots, \hat X_q, A_{q+1}\rmulti$ which has the same size multiset and no worse TC than $P$ but is totally ordered (remember the $\hat X_i$ are pairwise ordered and all elements in $A_{q+1}$ are no larger than any elements in any of the $\hat X_i$). This concludes the induction.
\end{proof}

\subsubsection{Ordered minimizable and quadrangle inequality}
In this subsection we provide evidence that quadrangle inequality and ordered minimizable are indeed not the same concepts by providing a simple counter example.

\begin{lemma}
    Quadrangle inequality does not imply ordered minimizable.
\end{lemma}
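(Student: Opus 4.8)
The goal is to exhibit a cost function $C$ whose cluster-cost matrix satisfies the quadrangle inequality \cref{eq:monge} (hence is concave), but for which the total cost $TC=\sum_j C(X_j)$ is \emph{not} ordered minimizable — that is, there is an optimal partition for some relevant (e.g.\ bipartition) minimization that is not attainable by any pairwise ordered partition of the same size profile. Since ordered minimizability is driven by the $2$-partition condition (\Cref{thm::ordered_minimizable}), the cleanest strategy is to kill $2$-partition ordered minimizability directly: find a tiny universe $\Omega$ and a single split into two blocks $\{Y,Z\}$ such that reordering $Y,Z$ into pairwise ordered blocks $\{L,R\}$ of the \emph{same cardinalities} strictly increases the cost, while the matrix $\matr{C}$ still obeys \cref{eq:monge}.

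The plan is to first note that concavity (quadrangle inequality) is a statement about the \emph{contiguous} costs $C(i,j)=C(\lmulti x_{i+1},\dots,x_j\rmulti)$ on sorted, consecutive indices, whereas ordered minimizability concerns arbitrary (possibly interleaved) multi-sets. This gap is exactly what I would exploit: I can define $C$ on contiguous clusters so that \cref{eq:monge} holds, while assigning the cost on non-contiguous multi-sets in a way that makes an interleaved partition cheaper than the sorted one. Concretely, I would take a small universe, say four points $x_1<x_2<x_3<x_4$, and design $C$ so that the pairwise-ordered bipartition $\{\lmulti x_1,x_2\rmulti,\lmulti x_3,x_4\rmulti\}$ (the only ordered one of that size profile) costs strictly more than the interleaved bipartition $\{\lmulti x_1,x_4\rmulti,\lmulti x_2,x_3\rmulti\}$. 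Then the optimum over all $2$-block partitions of profile $(2,2)$ is interleaved and unreachable by any pairwise ordered partition, so $TC$ is not ordered minimizable.

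The key steps, in order, would be: (i) fix the four-point universe and the two competing bipartitions; (ii) define $C$ on all the multi-sets involved, choosing values so that the interleaved split is strictly cheaper; (iii) verify that the induced contiguous-cost matrix $\matr{C}$ (entries $C(i,j)$ for $i<j$) satisfies the quadrangle inequality \cref{eq:monge} for every $0\le a<b<c<d\le n$ — this is a finite check over the few index quadruples; and (iv) conclude via \Cref{thm::ordered_minimizable} (its contrapositive) that failure of $2$-partition ordered minimizability rules out ordered minimizability, so concavity cannot imply it. I would present the witness values in a short table or inline list rather than a long derivation.

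The main obstacle is the simultaneous satisfaction of two pulls: the quadrangle inequality constrains only the \emph{contiguous} entries $C(i,j)$ and is quite rigid (it forces ``wide is worse'' and ``splitting is beneficial'' once $C\ge 0$ and singletons cost $0$), yet I need the \emph{non-contiguous} cluster $\lmulti x_1,x_4\rmulti$ to be cheap despite being the widest. The resolution is precisely that \cref{eq:monge} never references the cost of $\lmulti x_1,x_4\rmulti$ as a single cluster in a way tied to the interleaved partition's total — the inequality only compares contiguous boundary costs $C(a,c),C(b,d),C(a,d),C(b,c)$, whose index pairs are all ``proper'' cut positions, whereas the interleaved partition charges $C(\lmulti x_1,x_4\rmulti)$, a quantity the quadrangle inequality leaves essentially free to set. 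So the real work is bookkeeping: making sure the value I assign to the non-contiguous block does not accidentally get constrained by, or contradict, the contiguous monotonicity forced by concavity. Once the index arithmetic is pinned down, a handful of explicit numbers (or a degenerate example where several points coincide) should make both the quadrangle check and the strict cost inequality transparent.
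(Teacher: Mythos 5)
Your core idea is the same one the paper exploits: the quadrangle inequality only constrains the \emph{contiguous} costs $C(i,j)$, while ordered minimizability quantifies over arbitrary (possibly interleaved) multi-sets, so the cost of a non-contiguous cluster like $\lmulti x_1,x_4\rmulti$ is essentially unconstrained. Where you differ is in execution: you propose a bespoke finite table of values on a four-point universe, whereas the paper uses the single closed-form function $C_{\min}(X):=\min(X)$ on non-negative reals, for which the quadrangle inequality holds \emph{with equality} (both sides equal $x_{a+1}+x_{b+1}$) and the interleaved bipartition $\{\{\omega_1,\omega_3\},\{\omega_2\}\cup R\}$ costs $\omega_1+\omega_2$, strictly beating every ordered bipartition with both parts of size at least two (which necessarily costs at least $\omega_1+\omega_3$). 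The paper's example buys a one-line verification of concavity and works uniformly for any universe; yours requires checking the five quadrangle inequalities for $n=4$ by hand, but realizable witness values do exist (e.g.\ singletons cost $0$, contiguous pairs cost $1$, contiguous triples cost $2$, the full set costs $3$, $C(\lmulti x_1,x_4\rmulti)=0$, and the remaining non-contiguous sets cost something large), so your plan goes through.

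One logical point to fix: in step (iv) you invoke the ``contrapositive'' of \Cref{thm::ordered_minimizable} to argue that failure of $2$-partition ordered minimizability rules out ordered minimizability. That is the \emph{converse} of the theorem, not its contrapositive, and the paper never establishes it. Fortunately you do not need it: your construction exhibits a relevant partition of the whole universe that is strictly cheaper than every pairwise ordered relevant partition, which violates \Cref{def::minimizable} directly. Drop the appeal to the theorem and conclude from the definition.
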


\begin{proof}
    Let our universe $\Omega\subseteq \mathbb{R}^\geq$ be a finite set and
    $$C_{\min}(X):= \min(X) $$
    Then by the following $C_{\min}$ fullfills the quadrangle inequality ($a<b<c<d$)
    \begin{align*}
        C_{\min}(a,c) + C_{\min}(b,d) &\leq C_{\min}(a, d) + C_{\min}(b, c)\\
        \Leftrightarrow x_{a+1} +x_{b+1} &\leq x_{a+1} +x_{b+1}
    \end{align*}
    The last line is certainly true.
    
    We will now show, that $C_{\min}$ is not ordered minimizable.
    Let $|\Omega|\geq4$ and let $\omega_1 <\omega_2 <\omega_3$ be the smallest, second smallest, and third smallest element of the universe $\Omega$.
    Let the multi-set $R:= \Omega \setminus \{\omega_1, \omega_2, \omega_3\}$.
    Then for $A=\{\omega_1, \omega_3\}$ and $B=\{\omega_2\}\cup R$ be two sets then $TC(\{A, B\})=\omega_1 + \omega_2$ but in every ordered partition with $\min(|L|, |R|)\geq 2$ we have that $\omega_1$ and $\omega_2$ are in the same set. But this means for any pairwise ordered sets $L$ and $R$ with $\min(|L|, |R|)\geq 2$ we have $TC(\{L, R\})\geq \omega_1 + \omega_3 > TC(\{A, B\})$. Thus $C_{\min}$ is not ordered minimizable.
\end{proof}

\subsubsection{SSE is 2-partition ordered minimizable}
\label{sec::sse_om}
Looking at the variance $\text{var}(X) = \frac{1}{|X|}\sum_{x \in X}(x-\bar X)^2 =\frac{1}{|X|}\left( \sum_{x \in X} x^2 \right) -\bar X ^2$ where the latter is a common simplification. We now note, that $\text{SSE}(X) = n \, \text{var}(X) = \left(\sum_i x_i^2 \right) - n \bar X ^2$ thus when comparing two partitions $P_1$ and $P_2$ of the same original set we see, that
\begin{align*}
\text{TSSE}(P_1) &\leq \text{TSSE}(P_2)\\
 \Leftrightarrow -\sum_{X\in P_1} n_X \bar X ^2 &\leq -\sum_{X\in P_2} n_X \bar X ^2
\end{align*}

For the case of two partitions $P_1 = \lmulti A+\tilde A, B+\tilde B\rmulti$ and $P_2 = \lmulti A+\tilde B, B+\tilde A \rmulti$ with $|\tilde A|=|\tilde B|$ but $\tilde A\neq\tilde B$ (think of exchanging elements $\tilde A$ from $A +\tilde A$ with the $\tilde B$ elements in $B + \tilde B$). This means
\begin{align*}
\text{TSSE}(P_1) &\leq \text{TSSE}(P_2)\\
 \Leftrightarrow - n_{A+\tilde A} \mu_{A+\tilde A} ^2 - n_{B+\tilde B} \mu_{B+\tilde B} ^2 &\leq - n_{A+\tilde B} \mu_{A+\tilde B} ^2 - n_{B+\tilde A} \mu_{B+\tilde A} ^2\\
\Leftrightarrow  n_{A+\tilde A} \mu_{A+\tilde A} ^2 - n_{A+\tilde B} \mu_{A+\tilde B} ^2  &\geq n_{B+\tilde A} \mu_{B+\tilde A} ^2 - n_{B+\tilde B} \mu_{B+\tilde B} ^2
\end{align*}
Where $n_S$ denotes the size of set $S$ and $\mu_S$ denotes the average of the set $\mu$.
To treat the lhs and rhs simultaneously, lets consider multisets $X,Y,Z$  with $|Y| = |Z|$:
\begin{align*}
    &n_{X+\tilde Y} \mu_{X+\tilde Y} ^2 - n_{X\tilde Z} \mu_{X+\tilde Z} ^2\\
    &= \frac{1}{n_{X+\tilde Y}}\left( s_X^2 + s_Y^2+2s_X s_Y - s_X^2 - s_Z^2 - 2s_Xs_Z \right)\\
    &= \frac{1}{n_{X+\tilde Y}}\left( s_Y^2+2s_X s_Y - s_Z^2 - 2s_Xs_Z \right)\\
    &= \frac{1}{n_{X+\tilde Y}}\left(s_Y-s_Z \right) (s_Y+s_Z + 2s_X)\\
    &= \left(s_Y-s_Z \right)(\mu_{X+Y} + \mu_{X+Z})
\end{align*}
Bringing this together with the previous equations yields
\begin{align*}
\text{TSSE}(P_1) &\leq \text{TSSE}(P_2)\\
\Leftrightarrow  (s_{\tilde A} - s_{\tilde B})(\mu_{A+\tilde A} +  \mu_{A+\tilde B})  &\geq (s_{\tilde A} - s_{\tilde B})(\mu_{B+\tilde A} + \mu_{B+\tilde B}) \\
\Leftrightarrow \mu_{A+\tilde A} +  \mu_{A+\tilde B}  &\leq \mu_{B+\tilde A} + \mu_{B+\tilde B} \\
\Leftrightarrow   \mu_{A+\tilde B} -  \mu_{B+\tilde A}  & \leq \mu_{B+\tilde B} -\mu_{A+\tilde A}
\end{align*}
Where the the second last line is true if $s_{\tilde A} < s_{\tilde B}$.
If we can maximize the right hand side of the last expression which is only a function of the partition $P_1$, this is the same as minimizing the TSSE of $P_1$ with respect to $P_2$. The maximum of the rhs is achieved by choosing that $B$ and $\tilde B$ contain the largest elements while $A$ and $\tilde A$ contain the smallest elements. This naturally fullfills $s_{\tilde A} < s_{\tilde B}$ as $|\tilde A| = |\tilde B|$ (remember $\tilde A \neq \tilde B$). Thus we conclude that for all 2-partitions $P_2$ we can find pairwise orderd $P_1$ which has no worse TSSE, i.e. TSSE is 2-partition ordered minimizable.

\subsubsection{Infinity norm is 2-partition ordered minimizable}
We denote the smallest/largest element in $\Omega$ with $l$/$r$. Lets assume we are given a partition $P$ of the elements in $\Omega$ into $A$ and $B$. Wlog lets assume $l \in A$.
We choose two sets $L$ and $R$ such that $L$ contains the $|A|$ smallest elements and $R$ the remaining elements, then
\begin{align*}
    C_\infty(L) + C_\infty(R) &\leq C_\infty(A) + C_\infty(B)\\
   \Leftrightarrow \frac{\max(L)-l}{2} + \frac{r-\min(R)}{2} &\leq \frac{\max(A) -l}{2} + \frac{\max(B) - \min(B)}{2}\\
    \Leftrightarrow \max(L) - \min(R) & \leq \begin{cases}
        \max(A) - \min(B) & \text{if }\max(B) = r\\
        \max(B) - \min(B) & \text{if }\max(A) = r\\
    \end{cases}
\end{align*}
In the first case, we have that $\max(L) \leq \max(A)$ as $L$ contains the smallest elements and $|A| = |L|$. Further, $\min(B) \leq \min(R)$ as $R$ contains the largest elements and $|B| = |R|$. In the second case we observe that the left hand side is always non positive while the right hand side is always non negative.

\subsubsection{SAE is 2-partition ordered minimizable}
For a multiset $A$ we denote with $a_1/a_2$ the multisets of size $\floor*{ \frac{|A|}{2} }$ containing the smallest/largest elements of $A$.
We define $b_i,l_i,r_i$ similar for multisets $B, L, R$.
We then note, that $SAE(A) = \sum_{x\in a_1}-x + \sum_{x\in a_2}x$ irrespective of whether $|A|$ is even or odd.
Lets consider a partition of the elements in $\Omega$ into two sets $A, B$.
Let us further the multisets $c_{1/2}:= l_{1/2}\cup r_{1/2}$ and $d_{1/2}:= a_{1/2}\cup b_{1/2}$.
In a slight abuse of notation let us denote the index (i.e. position in an order of $\Omega$) of the i-th smallest element in $c_{1/2}$ as $c_{1/2}(i)$.
Then $x_{c_{1/2}(1)}$ is the smallest, $x_{c_{1/2}(2)}$ is the second smallest element of $c_{1/2}$ and so on.
We similarly do that with the sets $d_{1/2}$ (i.e. $x_{d_{1/2}(i)}$ is the i-th smallest element of $d_{1/2}$).
We define $L/R$ as the sets containing the smallest/largest elements of $\Omega$.
The size of $L$ is $\min(|A|, |B|)$ if $d_2(1) \leq \max(|a_1|, |b_1|) + 1$ and $\max(|A|, |B|)$ otherwise.
The size of $R$ is $|\Omega|-|A|$.

Lets us first consider the first $|l_1|$ of the sets $c_{1/2}$ and $d_{1/2}$. For $i \leq |l_1|$ certainly $c_{1}(i) = d_{1}(i)$ (these are the $|l_1|$ smallest elements of $\Omega$). Further, $c_{2}(i)\leq d_{2}(i)$  as $c_{2}(1)\leq d_{2}(1)$ and the $c_2$ continue consecutively for the next $|l_1|-1$ elements.
In the case $|L|=\min(|A|, |B|)$ the relation $c_2(1)\leq d_2(1)$ is pretty obvious as in this case $c_2(1)$ is minimal among all partitions with the same size.
In the other case, we use that now $d_2(1) > \max(|a_1|, |b_1|) + 1 \geq c_2(1)$.

For the remaining $|r_1|$ elements in $c_{1/2}$ or $d_{1/2}$, let $i\leq |r_1|$ and we can do a similar argument but from the other direction. For ease of notation we set $n_\frac{1}{2}:=|l_1|+|r_1|$.
We observe that $c_{2}(n_\frac{1}{2}-i) = d_{2}(n_\frac{1}{2}-i)$ as these are the $|r_1|$ largest elements.  Also $c_{1}(n_\frac{1}{2}-i) \geq d_{1}(n_\frac{1}{2}-i)$ as $c_{1}(n_\frac{1}{2})\geq d_{1}(n_\frac{1}{2})$ and the next $|r_1|-1$ smaller elements are consecutively.
In the case $|L|=\max(|A|, |B|)$ the relation $c_{1}(n_\frac{1}{2})\geq d_{1}(n_\frac{1}{2})$ is pretty obvious as in this case $c_{1}(n_\frac{1}{2})$ is maximal among all partitions with the same size.
For the other case let $|A|\leq |B|$.
We use that now $d_2(1) \leq \max(|a_1|, |b_1|) + 1$ which implies that $\max(a_1)\leq d_2(1)-1 \leq \max(|a_1|, |b_1|)$ or in other words all indices of elements in $d_1$ that are larger than $\max(|a_1|, |b_1|)$ are from $b_1$.
Because there are at least $|b_1|$ elements larger than those in $b_1$, we can conclude that $d_1(n_\frac{1}{2})\leq n-|b_1| \leq c_1(n_\frac{1}{2})$.

Turning back to the SAE we find that
\begin{align*}
    SAE(L) + SAE(R) &\leq SAE(A)+SAE(B)\\
    \Leftrightarrow \sum_{i=1}^{|a_1|+|b_1|}x_{c_{2}(i)}-x_{c_{1}(i)} &\leq \sum_{i=1}^{|a_1|+|b_1|} x_{d_{2}(i)}-x_{d_{1}(i)}
\end{align*}
Looking at individual terms we find that $x_{c_{2}(i)}-x_{c_{1}(i)} \leq x_{d_{2}(i)}-x_{d_{1}(i)}$ because $c_{2}(i)\leq d_{2}(i)$ and $c_{1}(i)\geq d_1(i)$. If the individual lhs terms are no larger than the rhs term, then certainly the lhs sum is no larger than the rhs sum.

\subsubsection{Round up/down costs are 2-partition ordered minimizable}
\label{sec::round_om}
Let $\{A, B\}$ be a 2-partition of $\Omega$. Wlog $\max(A)\leq \max(B)$ then certainly $\max(B)=\max(\Omega)$.  We now choose $L, R$ to be the smallest/largest elements of $\Omega$ with sizes $|A|$ and $|B|$ respectively. Then the round-up cost is
\begin{align*}
    C_\uparrow(L) + C_\uparrow(R) &\leq C_\uparrow(A)+C_\uparrow(B)\\ \scriptstyle
    \Leftrightarrow \sum_{x\in L} \max(L)-x + \sum_{x\in R}\max(R)-x & \scriptstyle\leq \sum_{x\in A} \max(A)-x + \sum_{x\in B}\max(B)-x\\
    \Leftrightarrow |A| \max(L) &\leq |A| \max(A)
\end{align*}
Where a lot of terms cancel as each $x$ appears exactly once, and $\max B =\max R$. Certainly $\max(L) \leq \max(A)$ as both have the same number of elements and $\max L$ is minimal over all sets of that size.

We can do a similar argument for the round down cost.
Let $\{A, B\}$ be a 2-partition of $\Omega$. Wlog $\min(A)\leq \min(B)$ then certainly $\min(A)=\min(\Omega)$.  We now choose $L, R$ to be the smallest/largest elements of $\Omega$ with sizes $|A|$ and $|B|$ respectively. Then the round-down cost is
\begin{align*}
    C_\downarrow(L) + C_\downarrow(R) &\leq C_\downarrow(A)+C_\downarrow(B)\\ \scriptstyle
    \Leftrightarrow \sum_{x\in L} x - \min(L) + \sum_{x\in R}x-\min(R) & \scriptstyle\leq \sum_{x\in A} \min(A)-x + \sum_{x\in B}\min(B)-x\\
    \Leftrightarrow - |B| \min(R) &\leq -|B| \min(B) \\
    \Leftrightarrow  \min(R) &\geq  \min(B)
\end{align*}
Which is true as $\min(R)$ is maximal among all sets of its size.

\subsection{Negative results on ordered minimizable}
The mean absolute error ($MAE(X) = \frac{1}{|X|} \sum_{x \in X} |x-\text{Median}(X)|$) is \emph{not ordered minimizable}.
One counter example is: For $\Omega=\lmulti-1,0,0,0,0,1\rmulti$, the partition $\{\lmulti-1,1,0,0\rmulti, \lmulti0,0\rmulti\}$ has cost $\frac{1}{2}$ but the two pairwise ordered partitions $\{\lmulti -1,0, 0,0 \rmulti, \{0,1\}\}$ and $\{\{-1,0\}, \lmulti 0,0,0,1\rmulti\}$ both have costs $\frac{3}{4}$.

The $\ell_2$ cost ($C_{\ell_2}(X) = \sqrt{SSE(X)}$) is not ordered minimizable.
One counter example is: For $\Omega=\lmulti-1,0,0,0,1\rmulti$, the partition $\{\{-1,1,0\}, \lmulti 0,0\rmulti\}$ has cost $\sqrt{2}\approx 1.41$ but the two pairwise ordered partitions $\{\lmulti -1,0, 0\rmulti, \{0,1\}\}$ and $\{\lmulti -1,0,0\rmulti, \{0,1\}\}$ both have costs $\sqrt{\tfrac{1}{2}}+\sqrt{\tfrac{2}{3}}\approx 1.52$.

The mean round down error ($d_{\bar \downarrow}(X) = \frac{1}{|X|} \sum_{x \in X} x-\min (X)$) is \emph{not ordered minimizable}.
One counter example is: For $\Omega=\lmulti 0,0,0,1,1,2\rmulti$, the partition $\{\lmulti0,0,0,2\rmulti, \lmulti 1,1\rmulti\}$ has cost $\frac{1}{2}$ but the two pairwise ordered partitions $\{\lmulti 0, 0, 0, 1\rmulti, \{1,2\}\}$ and $\lmulti 0,0\rmulti, \lmulti 0,1,1,2\rmulti\}$ have costs $\frac{3}{4}$ and $1$ respectively.

One can construct a similar counter example for mean round up cost.
The mean round up error ($d_{\bar \uparrow}(X) = \frac{1}{|X|} \sum_{x \in X} \max (X) - x$) is \emph{not ordered minimizable}. One counter example is: For $\Omega=\{0,1,1,2,2,2\}$, the partition $\{\lmulti 0,2,2,2\rmulti, \lmulti 1,1\rmulti\}$ has cost $\frac{1}{2}$ but the two pairwise ordered partitions $\{\lmulti 0,1, 1,2\rmulti, \lmulti 2,2\rmulti\}$ and $\{0,1\}, \lmulti 1,2,2,2\rmulti\}$ have costs $1$ and $\frac{3}{4}$ respectively.

\subsection{Ordered minimizable but splitting is not beneficial}
We were wondering whether we could provide a cost function which is ordered minimizable but does not have the property, that splitting is beneficial.
Let us denote with $X_{> M(X)}$ the set of elements in the set $X$ larger than median of $X$. Let our universe consist only of non negative reals i.e. $\Omega \subseteq \mathbb{R^{\geq}}$. Then we can define the a cost function
$$C^*(X) = \frac{2}{|X|^\alpha} \sum_{x\in X_{>M(X)}} x$$

If you consider your universe to be a set of item prices, the cost function describes a scenario where you pay a discounted price on only the most expensive half of items.
The discount parameter $0 \leq \alpha\leq 1$ controls the amount of discount provided.
Values of $\alpha$ close to 1 indicate that you get a significant discount if you buy in bulk, while low values of $\alpha$ indicate very low discount when buying bulk.

The cost function $C^*(X)$ is ordered minimizable but it does not have the splitting is beneficial property for $\alpha>0$.
For $\alpha=1$ an optimal UM clustering is just a single cluster containing all points independent of the actual universe $\Omega$.

\end{document}